\newtheorem{theorem}{Theorem}[section]
\newtheorem{lemma}[theorem]{Lemma}
\newtheorem{corollary}[theorem]{Corollary}
\newtheorem{definition}[theorem]{Definition}
\let\expandafter\oldproof\csname\string\proof\endcsname
\renewenvironment{proof}[1][\proofname]{%
  \oldproof[#1]%
}{}  %{\oldendproof}
\newcommand{\BO}{\mathcal{O}}
\newcommand{\probterm}[2]{\texttt{PROB}^{#2}_{#1}}
\newcommand{\prodset}[1]{\texttt{PROD}_{#1}}
\newcommand{\termset}[1]{\texttt{TERM}_{#1}}
\begin{document}

\title{Concentration Independent Random Number Generation in Tile Self-Assembly\footnote{This research was supported in part by National Science Foundation Grants CCF-1117672 and CCF-1555626.}\footnotetext{A preliminary version of some of these results appeared in~\cite{cameron2015flipping}  and \cite{Cameron2015arxiv}.}}

\author{
Cameron Chalk\footnotemark[1]
\and
Bin Fu\footnotemark[1]
\and
Eric Martinez\footnotemark[1]
\and
Robert Schweller\footnotemark[1]
\and
Tim Wylie\footnotemark[1]
}

\date{}
%get rid of page number on the bottom
\clearpage\maketitle
\thispagestyle{empty}

\vspace*{-.5cm}
\begin{center}
%$^*$
Department of Computer Science\\The University of Texas - Rio Grande Valley\\Edinburg, TX, 78539-2999 \\
{\normalfont \{cameron.chalk01, bin.fu, eric.m.martinez02, robert.schweller, timothy.wylie\}@utrgv.edu}
\end{center}

%\maketitle
%\slugger{mms}{xxxx}{xx}{x}{x--x}%slugger should be set to mms, siap, sicomp, sicon, sidma, sima, simax, sinum, siopt, sisc, or sirev

\begin{abstract}
In this paper we introduce the \emph{robust random number generation} problem where the goal is to design an abstract tile assembly system (aTAM system) whose terminal assemblies can be split into $n$ partitions such that a resulting assembly of the system lies within each partition with probability 1/$n$, regardless of the relative concentration assignment of the tile types in the system.  First, we show this is possible for $n=2$ (a \emph{robust fair coin flip}) within the aTAM, and that such systems guarantee a worst case $\BO(1)$ space usage.  We accompany our primary construction with variants that show trade-offs in space complexity, initial seed size, temperature, tile complexity, bias, and extensibility, and also prove some negative results.  As an application, we combine our coin-flip system with a result of Chandran, Gopalkrishnan, and Reif to show that for any positive integer $n$, there exists a $\BO(\log n)$ tile system that assembles a constant-width linear assembly of expected length $n$ for any concentration assignment.  We then extend our robust fair coin flip result to solve the problem of robust random number generation in the aTAM for all $n$. Two variants of robust random bit generation solutions are presented: an unbounded space solution and a bounded space solution which incurs a small bias.  Further, we consider the harder scenario where tile concentrations change arbitrarily at each assembly step and show that while this is not possible in the aTAM, the problem can be solved by exotic tile assembly models from the literature.
\end{abstract}

\newpage
\setcounter{page}{1}

\section{Introduction} \label{sec:introduction}

\emph{Self-assembly} is the process by which local interactivity among unorganized, autonomous units results in their amalgamation into more complex compounds.  One of the premiere models for studying the theoretical possibilities of self-assembly is the \emph{abstract tile assembly model} (aTAM)~\cite{Winf98} in which system monomers are 4-sided tiles (inspired by Wang tiles~\cite{Wang61}) that attach to a growing seed assembly when matching glues present a sufficient bonding strength.  The motivation for studying the aTAM stems from the feasibility of a nanoscale DNA implementation~\cite{ContantineThesis}, along with the universal computational power of the model~\cite{RotWin00}, which permits many features including \emph{algorithmic} self-assembly of general shapes~\cite{SolWin07}, and more~\cite{Doty-2012a,nacoTileSurvey}.

% -*- root: ../fair_coin_flipping.tex -*-
%\subsection*{Our Results}

\begin{table}[t]
\captionsetup{font=bf}
\centering
%\caption*{SUMMARY OF POSITIVE RESULTS}
\begin{tabular}{lc}
   \renewcommand{\arraystretch}{1.2}
    \begin{tabular}[t]{cccccc}

        \multicolumn{6}{c}{\textbf{Robust Coin Flip in the aTAM}}                                                                                                                                                                                                   \\ \hline
        \multicolumn{1}{|c|}{\textbf{Space}}             & \multicolumn{1}{c|}{\textbf{Bias}}            & \multicolumn{1}{c|}{$\tau$}            & \multicolumn{1}{c|}{$|\sigma|$} & \multicolumn{1}{c|}{\textbf{$k$-ext}}   & \multicolumn{1}{c|}{\textbf{Theorem}} \\ \hline
        \multicolumn{1}{|c|}{$\BO(1)$}          & \multicolumn{1}{c|}{-}               & \multicolumn{1}{c|}{1}                 & \multicolumn{1}{c|}{7}          & \multicolumn{1}{c|}{2}       & \multicolumn{1}{c|}{\ref{2extpositive}}       \\ \hline
        \multicolumn{1}{|c|}{$\BO(1)$}          & \multicolumn{1}{c|}{-}               & \multicolumn{1}{c|}{1}                 & \multicolumn{1}{c|}{1}          & \multicolumn{1}{c|}{2}       & \multicolumn{1}{c|}{\ref{2extt1}}       \\ \hline
        %\multicolumn{1}{|c|}{O(1)}               & \multicolumn{1}{c|}{-}               & \multicolumn{1}{c|}{\textbf{impossible}}                            & \multicolumn{1}{c|}{1}       & \multicolumn{1}{c|}{\ref{1-ext negative}}       \\ \hline
        \multicolumn{1}{|c|}{unbounded}         & \multicolumn{1}{c|}{-}               & \multicolumn{1}{c|}{2}                 & \multicolumn{1}{c|}{1}          & \multicolumn{1}{c|}{1}       & \multicolumn{1}{c|}{\ref{1-ext positive}}       \\ \hline
        \multicolumn{1}{|c|}{$s$}                 & \multicolumn{1}{c|}{$\le p^{(s/10)}$} & \multicolumn{1}{c|}{2}             & \multicolumn{1}{c|}{1}          & \multicolumn{1}{c|}{1}       & \multicolumn{1}{c|}{\ref{1-ext approximate positive}}       \\ \hline \\
        \end{tabular}
&
    
     \vspace*{.15cm}
    \begin{tabular}[t]{cccc}
        \multicolumn{4}{c}{\textbf{Unstable Concentrations (Thm. \ref{thm:secondary_models})}}                           \\ \hline
        \multicolumn{1}{|c|}{\textbf{Model}} & \multicolumn{1}{|c|}{\textbf{Space}}                       & \multicolumn{1}{c|}{$\tau$}            & \multicolumn{1}{c|}{$|\sigma|$}           \\ \hline
        %\multicolumn{1}{|c|}{aTAM} & \multicolumn{5}{|c|}{\textbf{impossible}}                                                                                                                                                                                   & \multicolumn{1}{c|}{\ref{thm:unstable_neg}}       \\ \hline
        \multicolumn{1}{|c|}{neg-aTAM} & \multicolumn{1}{|c|}{$\BO(1)$}                             & \multicolumn{1}{c|}{1}          & \multicolumn{1}{c|}{2}               \\ \hline
        \multicolumn{1}{|c|}{neg-hTAM} & \multicolumn{1}{|c|}{$\BO(1)$}                        & \multicolumn{1}{c|}{1}          & \multicolumn{1}{c|}{1}        \\ \hline
        \multicolumn{1}{|c|}{polyTAM} & \multicolumn{1}{|c|}{$\BO(1)$}                          & \multicolumn{1}{c|}{2}          & \multicolumn{1}{c|}{3}        \\ \hline
        % \multicolumn{1}{l}{}            & \multicolumn{1}{l}{}                 & \multicolumn{1}{l}{}                 & \multicolumn{1}{l}{}              & \multicolumn{1}{l}{}            & \multicolumn{1}{l}{}    \
        \multicolumn{1}{|c|}{GTAM} & \multicolumn{1}{|c|}{$\BO(1)$}                            & \multicolumn{1}{c|}{1}          & \multicolumn{1}{c|}{2}            \\ \hline
    \end{tabular}
    
    \\

    %\vspace*{.15cm}
    \begin{tabular}[t]{cccccc}
        \multicolumn{6}{c}{\textbf{General Random Number Generation}}     \\
        \hline
        \multicolumn{1}{|c|}{\textbf{Space}}             & \multicolumn{1}{c|}{\textbf{Bias}}            & \multicolumn{1}{c|}{$\tau$}            & \multicolumn{1}{c|}{$|\sigma|$} & \multicolumn{1}{c|}{\textbf{$k$-ext}}   & \multicolumn{1}{c|}{\textbf{Theorem}} \\ \hline
        \multicolumn{1}{|c|}{unbounded}             & \multicolumn{1}{c|}{-}            & \multicolumn{1}{c|}{2}            & \multicolumn{1}{c|}{1} & \multicolumn{1}{c|}{2}   & \multicolumn{1}{c|}{\ref{thm:1nrandom}} \\
        \hline

         \multicolumn{1}{|c|}{$s$}             & \multicolumn{1}{c|}{$\leq \frac{1}{2^{\Theta \left (s/\log n \right )}}$}            & \multicolumn{1}{c|}{2}            & \multicolumn{1}{c|}{1} & \multicolumn{1}{c|}{2}   & \multicolumn{1}{c|}{\ref{thm:1nrandomBounded}}  \\

        \hline \\
        \end{tabular}
   &

        % \multicolumn{1}{l}{}                           & \multicolumn{1}{l}{}             & \multicolumn{1}{l}{}              & \multicolumn{1}{l}{}            & \multicolumn{1}{l}{}         & \multicolumn{1}{l}{}         \\
        %\hline
    % \vspace{.1cm}
    %\begin{tabular}{|c|c|c|c|}
    %\hline
    %\multicolumn{4}{|c|}{\textbf{Robust Linear Assemblies}}                                                                       \\ \hline
    %\multicolumn{1}{|c|}{Expected Length} & \multicolumn{1}{c|}{Tile Complexity} & \multicolumn{1}{c|}{Width} & \multicolumn{1}{c|}{Theorem}                          \\  \hline
    %\multicolumn{1}{|c|}{$n$}               & \multicolumn{1}{c|}{$\geq n$}     & \multicolumn{1}{c|}{1}            &  \multicolumn{1}{c|}{\ref{thm:expected_length_neg}}              \\  \hline
    %\multicolumn{1}{|c|}{$n$}             & \multicolumn{1}{c|}{$\mathcal{O}(\log n)$}               & \multicolumn{1}{c|}{4}            &  \multicolumn{1}{c|}{Corollary \ref{corollary:expected_length}}   \\  %\hline

    \vspace*{.15cm}
    \begin{tabular}[t]{|c|c|c|c|c|}
        \multicolumn{5}{c}{\textbf{Robust Linear Assemblies}}     \\
        \hline
        \textbf{$|T|$}      & \textbf{Width}    & $\tau$    & $|\sigma|$    & \textbf{Theorem} \\ \hline
        $\BO(\log n)$       & 4                 & 2         & 1             & \ref{theorem:simulate}, \ref{corollary:expected_length}\\ \hline
         $\BO(\log n)$      & 6                 & 1         & 1             & \ref{theorem:simulate1}, \ref{corollary:expected_lengtht1}  \\
         \hline
        \end{tabular}
    \\ \\
\end{tabular}

%\begin{tabular}{p{5cm}c}
%\\
%\hline
%\multicolumn{2}{|c|}{\textbf{Negative Results}}\\ \hline
%\multicolumn{1}{|c|}{There does not exist an aTAM unstable concentration robust coin flip system with no bias} & \multicolumn{1}{|c|}{Thm. \ref{thm:unstable_neg}} \\ \hline
%\multicolumn{1}{|c|}{There does not exist a 1-extensible aTAM robust coin flip tile system with no bias} & \multicolumn{1}{|c|}{Thm. \ref{1-ext negative}}       \\ \hline
%\end{tabular}

\captionsetup{font=normalsize}

\caption {$\tau$ represents the temperature of the system, $|\sigma|$ is the number of tiles in the seed assembly, $|T|$ is the size of the tile system, and \emph{k}-ext denotes the extensibility of the system.  Given the largest disparity in relative tile concentration between any pair of tile types in the system for a given concentration distribution, $p$ is the larger relative concentration of the two tiles. $s$ is a space constraint, and $n$ is the range of possible values or the length of the linear assembly.}
\label{tab:results}

\end{table}

% \begin{table}[h!]
% \centering
% \bfseries{SUMMARY OF RESULTS} \\
% \vspace{.1cm}
% \begin{tabular}{@{}| c | c | c | c | c | c | c | c |@{}}
% \hline
% \multicolumn{8}{|c|}{Exact}\\
% \hline
% \textbf{Model}      & \textbf{Max Space} & \textbf{Expected Space} & \textbf{Bias} & \textbf{$\tau$} & \textbf{$|\sigma|$} & \textbf{Extensibility} & \textbf{Theorems}\\
% \hline
% \textbf{aTAM} & $\infty$           & $O(1/pq)$                & -             & 2 & 1 & 1 & \ref{1-ext positive}             \\
% \hline
% \textbf{aTAM} & $O(1)$           & $O(1)$                & -             & 2 & 1 & 1 & \ref{2extpositivesingleseed}             \\
% \hline
% \textbf{aTAM} & $O(1)$           & $O(1)$                & -             & 1 & 5 & 2 & \ref{2extpositive}            \\
% \hline
% \multicolumn{8}{|c|}{Approximate}\\
% \hline
% \textbf{aTAM} & $c$          & $c$               & \textless $z^{1 + c/20}$             & 2 & 1 & 1 & \ref{1-ext approximate positive}            \\
% \hline
% \multicolumn{8}{|c|}{Exact - \emph{Unstable Concentrations}}\\
% \hline
% \textbf{Other Models*} & $O(1)$          & $O(1)$               & -             & 1-2 & 1 & 1-2 & \ref{thm:secondary_models}           \\
% \hline
% \end{tabular}
% \caption {$\tau$ represents the temperature of the system. $|\sigma|$ represents the number of tiles in the seed assembly. $p$ represents the relative concentration of HEADS tiles to TAILS tiles. $q=1-p$. $z=\max(p,q)$. $c$ is a constraint on the maximum number of tiles you may place in \ref{1-ext approximate positive}.}
% \label{tab:results}
% \end{table} 

A promising new direction in self-assembly is the consideration of \emph{randomized} self-assembly systems.  In randomized self-assembly (a.k.a. nondeterministic self-assembly), assembly growth is dictated by nondeterministic, competing assembly paths yielding a probability distribution on a set of final, terminal assemblies. Through careful design of tile-sets and the relative concentration distributions of these tiles, a number of new functionalities and efficiencies have been achieved that are provably impossible without this nondeterminism.  For example, by precisely setting the concentration values of a generic set of tile species, arbitrarily complex strings of bits can be \emph{programmed} into the system to achieve a specific shape with high probability~\cite{Dot10,KaoSchS08}.  Alternately, if the concentration of the system is assumed to be fixed at a uniform distribution, randomization still provides for efficient expected growth of linear assemblies~\cite{ChaGopRei12} and low-error computation at temperature-1~\cite{CookFuSch11}.  Even in the case where concentrations are unknown, randomized self-assembly can build certain classes of shapes without error in a more efficient manner than without randomization~\cite{BryChiDotKarSekJournal}.

Motivated by the power of randomized self-assembly, along with the potential for even greater future impact, we focus on the development of the most fundamental randomization primitive: the \emph{robust} generation of a uniform random bit.  In particular, we introduce the problem of self-assembling a uniformly random bit within $\BO(1)$ space that is guaranteed to work for all possible concentration distributions.  We define a tile system to be a \emph{coin flip} system, with respect to some tile concentration distribution, if the terminal assemblies of the system can be partitioned such that each partition has exactly probability 1/2 of assembling one of its terminals.  We say a system is a \emph{robust coin flip} system if such a partition exists that guarantees 1/2 probability for all possible tile concentration distributions.  Through designing systems that flip a fair coin for all possible (adversarially chosen) concentration distributions, we achieve an intrinsically fair coin-flipping system that is robust to the experimental realities of imprecise quantity measurements.  Such fair systems may allow for increased scalability of randomized self-assembly systems in scenarios where exact concentrations of species are either unknown or intractable to predict at successive assembly stages.

\paragraph{Our results}  Our primary result is an aTAM construction that constitutes a robust fair coin flip system which completes in a guaranteed $\BO(1)$ space even at temperature one.  We apply our robust coin-flip construction to the result of Chandran, Gopalkrishnan, and Reif~\cite{ChaGopRei12} to show that for any positive integer $n$, there exists a $\BO(\log n)$ tile system that assembles a constant width-$4$ linear assembly of expected length $n$ that works for all concentration assignments.  This result is for temperature two; at temperature one it must be a width-$6$ linear assembly.  We accompany this result with a proof that such a concentration independent assembly of width-1 assemblies is not possible with fewer than $n$ tile types.  We further accompany our main coin-flip construction with variant constructions that provide trade-offs among standard aTAM metrics such as space, tile complexity, and temperature, as well as new metrics such as coin bias, and the \emph{extensibility} of the system, which is the maximum number of distinct locations a tile can be added to a single producible assembly of the system.

We utilize the coin-flip construction as a fair random bit generator for implementation of some classical random number generation algorithms.  We show that 1-extensible systems, while computationally universal, cannot robustly coin-flip in bounded space without incurring a bias, but can robustly coin-flip in bounded expected space.  We also consider the more extreme model in which concentrations may change adversarially at each assembly step.  We show that the aTAM cannot robustly coin flip in bounded space within this model, but a number of more exotic extensions of the aTAM from the literature are able to robustly coin flip in $\BO(1)$ space.  We summarize our results in Table~\ref{tab:results}.  The problem of self-assembling random bits has been considered before~\cite{RNSSA}, but their technique, and almost all randomized techniques to date, do not work when arbitrary concentrations are considered. Further,  we utilize the self-assembly of uniform random bits to implement algorithms for uniform random number generation for any $n$, one construction achieving an unbiased generator with unbounded space and the other imposing a space constraint while incurring some bias.

\paragraph{Organization} Due to the many results in the paper, we briefly outline them here.  Section \ref{sec:definitions} gives the definitions of the models and terms used throughout the paper as well as an overview of some related previous work.  In Section \ref{sec:bounded} we cover the constant space coin flipping gadget- first with a big seed and then with a single seed at temperature two.  We then use this gadget in Section \ref{sec:linear} to assemble $\BO(1) \times n$ expected length linear assembles.  Section \ref{sec:boundedt1} shows that the coin flip gadget can be built at temperature one with some extra tiles, and then we show how the expected length linear assemblies can be built with a slightly larger constant width using the temperature one gadget.

The paper then covers general random number generation in the aTAM in Section \ref{sec:rng}.  Afterwards, the paper switches focus to the limitations of different aspects of the model covering 1-extensibility in \ref{sec:1ext} and unstable concentrations in \ref{sec:unstable}.  Then in Section \ref{sec:alternative}, we show how robust fair coin flips are possible in some other models.  Finally, we conclude and give some future directions in \ref{sec:conclusion}.

\section{Definitions and Model: Tiles, Assemblies, and Tile Systems} \label{sec:definitions}
% -*- root: ../fair_coin_flipping.tex -*-

%%%%%%%%%%%%%%%%%%%%%%%%%%%%%%%%%%%%%%%%%%%
 Consider some alphabet of glue types $\Pi$.  A tile is a unit square with four edges each assigned some glue type from $\Pi$.  Further, each glue type $g \in \Pi$ has some non-negative integer strength $str(g)$.  Each tile may be assigned a finite length string label, e.g., ``black",``white",``0", or ``1". For simplicity, we assume each tile center is located at a pixel $p = (p_x, p_y) \in \mathbb{Z}^2$. For a given tile $t$, we denote the tile center of $t$ as its position. As notation, we denote the set of all tiles that constitute all translations of the tiles in a set $T$ as the set $T^*$. An \emph{assembly} is a set of tiles each assigned unique coordinates in $\mathbb{Z}^2$.  For a given assembly $\alpha$, define the \emph{bond graph} $G_\alpha$ to be the weighted graph in which each element of $\alpha$ is a vertex, and each edge weight between tiles is $str(g)$ if the tiles share an overlapping glue $g$, and 0 otherwise.  An assembly $\alpha$ is said to be \emph{$\tau$-stable} for a positive integer $\tau$ if the bond graph $G_\alpha$ has min-cut at least $\tau$, and \emph{$\tau$-unstable} otherwise. A \emph{tile system} is an ordered triple $\Gamma =(T,\sigma,\tau)$ where $T$ is a set of tiles called the \emph{tile set} (we refer to elements of $T$ as tile types), $\sigma$ is an assembly called the \emph{seed} and $\tau$ is a positive integer called the \emph{temperature}.  When considering a tile $a$ that is some translation of an element of a tile set $T$, we will use the term \emph{tile type} of $a$ to reference the element of $T$ that $a$ is a translation from. Assembly proceeds by growing from assembly $\sigma$ by any sequence of single tile attachments from $T$ as long as each tile attachment connects with strength at least $\tau$.  Formally, we define what can be built in this fashion as the set of producible assemblies:

\begin{definition}[Producibility]
    For a given tile system $\Gamma=(T,\sigma,\tau)$, the set of \textbf{producible assemblies} for system $\Gamma$, $\prodset{\Gamma}$, is defined recursively:
    \begin{itemize}
        \item (Base) $\sigma \in \prodset{\Gamma}$
        \item (Recursion) For any $A\in \prodset{\Gamma}$ and $b \in T^*$ such that $C=A\cup \{b\}$ is $\tau$-stable, then $C \in \prodset{\Gamma}$.
    \end{itemize}
\end{definition}

    As additional notation, we say $A {\rightarrow^\Gamma_1} B$ if $A$ may grow into $B$ through a single tile attachment, and we say $A {\rightarrow^\Gamma} B$ if $A$ can grow into $B$ through 0 or more tile attachments. An \textbf{assembly sequence} for a tile system $\Gamma$ is a sequence (finite or infinite) $\vec{\alpha} = \langle \alpha_{1}, \alpha_{2},\dots \rangle$ in which $\alpha_{1}=\sigma$, each $\alpha_{i+1}$ is a single-tile extension of $\alpha_{i}$, and each $\alpha_{i}$ is $\tau$-stable. The \textbf{frontier} of an assembly $\alpha$, written as $F(\omega, \Gamma)$, is a partial function that maps an assembly $\omega$ and a tile system $\Gamma$ to a set of tiles $\{t \in T^* | \omega \cup \{t\} \in \prodset{\Gamma} \land t \notin \omega \}$. We further define $\termset{\Gamma}$ to be the subset of $\prodset{\Gamma}$ consisting only of assemblies for which no further tile in $T$ may attach (i.e., the assembly has an empty frontier).

\begin{definition}[Finiteness and Space]
For a given tile assembly system $\Gamma=(T,\sigma,\tau)$, we say $\Gamma$ is \textbf{finite} iff $\forall \beta \in \prodset{\Gamma},\exists \alpha \in \termset{\Gamma} : \beta \rightarrow^\Gamma \alpha$. That is, each producible assembly has a growth path ending in a finite, terminal assembly. If $\Gamma$ is not finite, we say it is \textbf{infinite}. Define the \textbf{space of an assembly} $\alpha$ as $|\alpha|$. Let the \textbf{space of a tile assembly system} be defined as the $\max\limits_{\alpha \in \termset{\Gamma}} |\alpha|$ iff $\Gamma$ is finite. If $\Gamma$ is infinite, let \textbf{space} remain undefined.  Note that a finite system may have infinite/unbounded space.
\end{definition}

\begin{definition}[Extensibility]
    Consider a tile assembly system $\Gamma=(T,\sigma,\tau)$, and assembly $\alpha \in \prodset{\Gamma}$. We denote the set of all locations at which a tile may stably attach to $\alpha$ as $L_{\alpha}$.  More formally, $L_{\alpha} = \{p_{t} | t \in F(\alpha, \Gamma)\}$. We say a tile system $\Gamma$ is \textbf{$k$-extensible} iff $\forall \alpha \in \prodset{\Gamma}, \left\vert{L_{\alpha}}\right\vert \le k$. Informally, a tile assembly system is $k$-extensible iff at any point in the assembly process, the assembly can only grow in at most $k$ locations.
\end{definition}

%%%%%%%%%%%%%%%%%%%%%%%%%%%%%%%%%%%%%%%%%%%
\subsection{Probability in Tile Assembly}

We use the definition of probabilistic assembly presented in~\cite{BeckerRR06,ChaGopRei12,CookFuSch11,Dot10,KaoSchS08}. Let $P$ be a function denoting a \textbf{concentration distribution} over a tileset $T$ representing the concentrations of each tile type with the restrictions $\forall t \in T, P(t) > 0$ and $\sum\limits_{t \in T}P(t)=1$. For a tile $t$, we sometimes refer to $P(t)$ as the \textbf{concentration} of $t$. Using a concentration distribution, we can consider probabilities for certain events in the system. To study probabilistic assembly, we can consider the assembly process as a Markov chain where each producible assembly is a state and transitions occur with non-zero probability from assembly $A$ to each $B$ whenever $A \rightarrow^\Gamma_1 B$. For each $B$ that satisfies $A \rightarrow^\Gamma_1 B$, let $t_{A\to B}$ denote the tile in $T$ whose translation is added to $A$ to get $B$. The transition probability from $A$ to $B$ is defined to be

	\begin{equation}\label{def:trans}
	TRANS(A,B) = \dfrac{P(t_{A\to B})}{\sum\nolimits_{\{C \vert A\rightarrow^\Gamma_1 C\}} P(t_{A\to C})}
	\end{equation}

	The probability that a tile system $\Gamma$ terminally assembles an assembly $A$ is defined to be the probability that the Markov chain ends in state $A$. For each $A \in \termset{\Gamma}$, let $\probterm{\Gamma \rightarrow A}{P}$ denote the probability that $\Gamma$ terminally assembles $A$ with respect to concentration distribution $P$.

\begin{definition}[Expected Space]
    For a given finite tile system $\Gamma=(T,\sigma,\tau)$, let the expected space of $\Gamma$ relative to a concentration distribution $P$ be defined as
 %   \begin{equation}
 %   \texttt{EXPECTEDSPACE}_\Gamma = \begin{cases}\sum\limits_{\alpha \in \termset{\Gamma}} |\alpha|\cdot\probterm{\Gamma \rightarrow \alpha}{P}  &\mbox{if } \Gamma \mbox{ is finite}  \\
%\infty & \mbox{if } \Gamma \mbox{ is infinite}.
%    \end{cases}.
%    \end{equation}

    \begin{equation*}
    \texttt{EXPECTEDSPACE}_\Gamma = \sum\limits_{\alpha \in \termset{\Gamma}} |\alpha|\cdot\probterm{\Gamma \rightarrow \alpha}{P}
    \end{equation*}

\end{definition}

\begin{definition}[Coin Flipping]We consider a finite tile system $\Gamma$ a \textbf{coin flip tile system with bias $b$} with respect to a concentration distribution $P$ for some $b \in \mathbb{R}^+$ iff the set of terminal assemblies in $\prodset{\Gamma}$ can be partitioned into two sets $X$ and $Y$ such that $\left|\sum\limits_{x \in X} \probterm{\Gamma \rightarrow x}{P} - \sum\limits_{y \in Y} \probterm{\Gamma \rightarrow y}{P}\right| \leq 2b$. 
A \textbf{fair coin flip tile system} is a coin flip tile system with bias $0$. 
We consider a finite tile system a \textbf{robust coin flip tile system with bias $b$} iff it is a coin flip tile system with bias $b$ for all concentration distributions; i.e. $\left|\sum\limits_{x \in X} \probterm{\Gamma \rightarrow x}{C} - \sum\limits_{y \in Y} \probterm{\Gamma \rightarrow y}{C}\right| \leq 2b$ for all concentration distributions $C$.
%We consider a finite tile system $\Gamma$ a \textbf{robust coin flip tile system with bias $b$} iff the set of terminal assemblies in $\prodset{\Gamma}$ can be partitioned into two sets $X$ and $Y$ such that $\left|\sum\limits_{x \in X} \probterm{\Gamma \rightarrow x}{C} - \sum\limits_{y \in Y} \probterm{\Gamma \rightarrow y}{C}\right| \leq 2b$ for all concentration distributions $C$. 
A \textbf{robust fair coin flip tile system} is a robust coin flip tile system with bias $0$.

%We consider a tile system a \textbf{robust fair coin flip tile system} iff it is a fair coin flip tile system with respect to all concentration distributions $P$. Similarly, we consider a tile system a \textbf{robust coin flip tile system with bias $b$} iff it is a coin flip tile system with bias $b$ with respect to all concentration distributions $P$.
\end{definition}

\section{Robust Fair Coin Flipping in the aTAM}\label{sec:bounded}

\begin{figure}[t]
    \centering
    \includegraphics[scale=2.5]{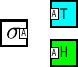}
   \caption{A non-robust fair coin flip for the uniform concentration distribution.}
    \label{simple}
\end{figure}

In this section we show systems capable of robust fair coin flips in the aTAM. Figure \ref{simple} shows a simple fair coin flip aTAM system for the uniform concentration distribution. Since the concentrations are uniform, the tiles labeled $H$ and $T$ have equal concentrations, and the probability of their attachment is equal. Then we can partition the terminal states into two sets: the terminal containing $H$ in one set and the terminal containing $T$ in the other. These partitions satisfy the definition of a fair coin flip, but only for the uniform concentration distribution. Any variation in the concentrations of the $H$ and $T$ tiles results in a coin flip with some bias.  To solve this problem for arbitrary concentration distributions, more involved techniques are required.

\begin{figure}[h]
    \centering
    \includegraphics[scale=2.7]{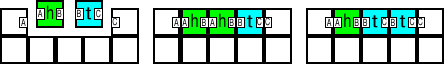}
    \caption{Shown are the $\sigma$, $h$, and $t$ tiles on the left, and the terminal states of the assembly system representing heads and tails. $A$, $B$ and $C$ glues are strength 1. Non-matching glues have 0 strength.}
    \label{bounded_tree}
\end{figure}

\begin{theorem}\label{2extpositive}
There exists a $\BO(1)$ space $2$-extensible robust fair coin flip tile system $~\Gamma~=~(T,\sigma,1)$ in the aTAM with $|\sigma|=7$.
\end{theorem}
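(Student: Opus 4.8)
The plan is to build a small tile system whose branching structure simulates a classical von Neumann trick for turning a biased coin into a fair one, but implemented geometrically so that the assembly is guaranteed to halt in $\BO(1)$ space. The intuition: at a branch point two tiles $h$ and $t$ compete to attach at a single location, with (unknown) probabilities $p$ and $q=1-p$ respectively, where $p,q$ depend on the adversarial concentration distribution. A single such competition is biased, but if we perform \emph{two} independent competitions and look at the ordered pair of outcomes, then the event $(h,t)$ and the event $(t,h)$ each occur with probability $pq$, hence are equiprobable regardless of $p$. The remaining two outcomes $(h,h)$ and $(t,t)$ are "inconclusive"; in the classical algorithm one retries, but retrying forever would violate the $\BO(1)$-space requirement. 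The key idea must therefore be a gadget geometry in which, after an inconclusive pair, the assembly can be \emph{completed in a bounded region} and the inconclusive terminal assemblies can themselves be split evenly between the two partition classes $X$ and $Y$ by the symmetry $h \leftrightarrow t$.

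Concretely, I would have the seed $\sigma$ (of size $7$, matching the statement) lay out a fixed scaffold with two designated competition sites. At each site, tiles $h$ and $t$ are the only two that can attach, via the strength-$1$ glues $A$, $B$, $C$ shown in Figure~\ref{bounded_tree}; the geometry of the scaffold forces the second competition to occur only after (or independently of, but recorded separately from) the first, so the four outcomes $hh, ht, th, tt$ lead to four distinct terminal assemblies, each reached with the probabilities $p^2, pq, qp, q^2$ respectively — crucially using that the transition probability $\mathrm{TRANS}(A,B)$ at a site depends only on the ratio of concentrations of $h$ and $t$ and is the same constant $p$ at both sites because the two sites present identical glue environments. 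Then define the partition by putting the $ht$-terminal in $X$, the $th$-terminal in $Y$, and splitting the two inconclusive terminals so that $hh$ goes to $X$ and $tt$ goes to $Y$ (or vice versa). Under the involution swapping the roles of $h$ and $t$ everywhere, $X$ and $Y$ are exchanged and the probability measure is preserved only if $\Pr[hh]=\Pr[tt]$, which is false in general ($p^2 \ne q^2$) — so this naive split does \emph{not} work, and that tension is exactly the main obstacle.

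The resolution — and the step I expect to be the crux of the argument — is to recurse the inconclusive cases a \emph{bounded} number of times within $\BO(1)$ space by reusing geometry, or, more likely given the $|\sigma|=7$ and $\BO(1)$-space constraints, to arrange the gadget so that from an inconclusive outcome the assembly deterministically funnels into \emph{one more pair of competitions} whose outcome tiles are placed at locations that overlap/reuse bounded space, and to show that the limiting behavior of this finite-depth-looking but actually cleverly-wired structure yields exact probability $1/2$ for each class. I would verify this by writing the probability that the Markov chain ends in $X$ as a sum over the (finitely many, by $\BO(1)$ space and finiteness of $\prodset{\Gamma}$) terminal assemblies, grouping terms by the symmetry $h\leftrightarrow t$, and checking that every term not fixed by the symmetry is matched with an equiprobable partner in the opposite class, while verifying there are no symmetry-fixed terminals with nonzero probability. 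The bulk of the work is then: (i) presenting the explicit $7$-tile-seed scaffold and $h,t$ tiles with their glues; (ii) proving the system is finite with space $\BO(1)$ by exhibiting that every producible assembly sits inside a fixed bounding box and has a terminal completion; (iii) proving $2$-extensibility by checking the frontier of every producible assembly has at most two attachment sites; and (iv) the probability computation above. I expect (iv), and in particular getting the wiring of the inconclusive branch exactly right so the von Neumann cancellation goes through in bounded space, to be where all the genuine difficulty lies; the rest is careful but routine case analysis on the finite producible set.
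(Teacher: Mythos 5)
Your framing of the task (exhibit a 7-tile seed scaffold, two competing tiles $h,t$, check finiteness, $\BO(1)$ space, 2-extensibility, then a partition argument) matches the paper, but the probabilistic mechanism you propose is different from the paper's and, as you describe it, cannot be completed. You build the randomness from two competitions that are independent and identically distributed --- you explicitly insist the two sites ``present identical glue environments,'' so each terminal records a string in $\{h,t\}^k$ with probability $p^{j}q^{k-j}$. For such a system, a partition $X,Y$ is fair for \emph{all} $p$ only if, for every $j$, the terminals with exactly $j$ occurrences of $h$ are split evenly between $X$ and $Y$ (the polynomials $p^j q^{k-j}$ are linearly independent), and this is impossible because there is a single all-$h$ terminal. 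Your proposed rescue --- recursing the inconclusive outcomes a \emph{bounded} number of times --- runs into the same obstruction: after the last round the leftover inconclusive mass is again distributed as unequal monomials in $p,q$ and must be assigned somewhere, so some bias always survives. This is not an artifact of your write-up; it is exactly the phenomenon the paper formalizes elsewhere: the bounded-space von Neumann implementation of Theorem~\ref{1-ext approximate positive} provably retains bias $p^{(s/10)}$, and Theorem~\ref{1-ext negative} rules out $\BO(1)$-space fair flips in the 1-extensible (sequential, independent-flip) setting. So the ``crux'' you defer to item (iv) is not a wiring detail to be checked; it is a genuinely missing idea, and no wiring of independent identical flips in bounded space can supply it.

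What the paper actually does for Theorem~\ref{2extpositive} is to make the successive competitions \emph{anti-correlated} through the frontier geometry rather than independent (Figure~\ref{bounded_tree}). The seed exposes three sites in a row; initially an $h$ tile (at one end) races a $t$ tile (at the other end). Once a tile of one type attaches, the only attachments the frontier then admits are of the \emph{opposite} type, at two competing locations, and at most one further $h$-versus-$t$ race can occur at the middle site. Consequently there are only two terminal assemblies, $hht$ and $htt$ (two of one type and one of the other), and a direct sum over the three assembly sequences reaching $hht$ gives $\Pr[hht]=\frac{c_t^2+2c_tc_h+c_h^2}{2(c_t+c_h)^2}=\frac12$ identically in the concentrations $c_h,c_t$; fairness is exact in $\BO(1)$ space precisely because placing a $t$ boosts the chance of ending with two $h$'s and vice versa, a cancellation your independent-flip scheme cannot produce. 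Your items (i)--(iii) are indeed routine once the right gadget is in hand, but the probability argument must be built on this dependent race, not on a truncated von Neumann extractor.
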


\begin{proof}
 To show this we present a tile system $\Gamma = (T,\sigma,1)$ in which two terminal states exist and are equiprobable for all concentration distributions $P$. $|T| = 9$ and $\sigma$ contains 7 tiles. A graphical representation of $\sigma$, the two tiles $h$ and $t$, and terminal states of the assembly system is shown in Figure \ref{bounded_tree}. The system terminates nondeterministically and contains either $2$ $h$ tiles and $1$ $t$ tile or $2$ $t$ tiles and $1$ $h$ tile. The system leverages any difference in tile concentrations between $h$ and $t$ by ensuring that placement of a $t$ tile increases the probability of terminating in an assembly containing $2 h$ tiles and vice versa. Without loss of generality, assume the leftmost bottom tile in $\sigma$ sits at position $(0,0)$. We will refer to each producible assembly sans $\sigma$ by the labels of the tiles in positions $(1,1), (2,1)$ and $(3,1)$ as such: $\_\_t, h\_\_, \_ht$ and so forth. Then, the two terminal assemblies of the system are $hht$ and $htt$. We now show that $\probterm{\Gamma \rightarrow hht}{P} = \dfrac{1}{2}$ for all concentration distributions $P$. Let $c_h$ be the concentration of the tile labeled $h$ and $c_t$ be the concentration of the tile labeled $t$; then, with $TRANS$ as defined in Equation~\ref{def:trans},
\begin{align*}
    \begin{split}
    \probterm{\Gamma \rightarrow hht}{P} &= TRANS(\sigma,\_\_t) \cdotp TRANS(\_\_t,\_ht) \cdotp TRANS(\_ht,hht) \\
        &\hspace{4 mm}+ TRANS(\sigma,\_\_t) \cdotp TRANS(\_\_t,h\_t) \cdotp TRANS(h\_t,hht) \\
        &\hspace{4 mm}+ TRANS(\sigma,h\_\_) \cdotp TRANS(h\_\_,h\_t) \cdotp TRANS(h\_t,hht) \\
        &= \dfrac{c_t}{c_t + c_h} \cdotp \dfrac{c_h}{c_h + c_h} \cdotp \dfrac{c_h}{c_h}
        + \dfrac{c_t}{c_t + c_h} \cdotp \dfrac{c_h}{c_h + c_h} \cdotp \dfrac{c_h}{c_t + c_h}\\
        &\hspace{4 mm}+ \dfrac{c_h}{c_t + c_h} \cdotp \dfrac{c_t}{c_t + c_t} \cdotp \dfrac{c_h}{c_t + c_h}\\
        %&= \dfrac{c_t}{2(c_t + c_h)} + \dfrac{c_tc_h}{2{c_t}^2 + 4c_tc_h + 2{c_h}^2} + \dfrac{{c_h}^2}{2{c_t}^2 + 4c_tc_h + 2{c_h}^2}\\
        &= \dfrac{{c_t}^2 + 2c_tc_h + {c_h}^2}{2{c_t}^2 + 4c_tc_h + 2{c_h}^2} \\
        &= \dfrac{1}{2}.
    \end{split}
\end{align*}\hfill \qed
\end{proof}

\subsection{Extension to a Single-Seed}\label{singleseed}
A common constraint in the aTAM is that $\sigma$ contains only one tile. Thus, no seed structure must be formed prior to the self-assembly process. The construction shown in Figure \ref{seed} addresses this constraint and works in a similar fashion as the construction in Theorem~\ref{2extpositive}. Note that this system requires $\tau = 2$.

\begin{theorem}\label{2extpositivesingleseed}
There exists a $\BO(1)$ space 2-extensible robust fair coin flip tile system $\Gamma=(T,\sigma,2)$ in the aTAM with $|\sigma|=1$.
\end{theorem}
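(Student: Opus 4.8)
The plan is to adapt the design philosophy of Theorem~\ref{2extpositive} to the single-seed, temperature-$2$ setting. First I would describe the construction implicit in Figure~\ref{seed}: a single seed tile $\sigma$ whose glues initiate a small gadget in which, at some branch point, either an $h$-type tile or a $t$-type tile can attach nondeterministically, and in which the subsequent forced growth (using cooperative strength-$2$ bonds to enforce a fixed assembly order where needed) again offers a competing $h/t$ choice whose outcome is correlated with the first choice. The key structural invariant to establish is that the set of terminal assemblies partitions into two classes $X$ (``heads'', e.g. those containing two $h$ tiles and one $t$ tile) and $Y$ (``tails''), exactly as in the earlier proof.

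Next I would enumerate the producible assemblies, as the earlier proof did with the shorthand notation $\_\_t, h\_\_, \_ht$, etc., listing every assembly sequence from $\sigma$ to each terminal. For each terminal in $X$ I would write $\probterm{\Gamma \rightarrow \alpha}{P}$ as a sum over assembly sequences of products of $TRANS$ values, using Equation~\ref{def:trans}; since only the $h$ and $t$ tiles are ever in competition with one another (all other attachments being forced, hence contributing a factor of $1$ to $TRANS$), every such product reduces to a rational function of $c_h$ and $c_t$ alone. Summing over all heads-terminals and simplifying, the goal is to show the total equals $\tfrac12$ identically in $c_h, c_t$ — the numerator should factor as a perfect square (or a symmetric form) over twice itself, mirroring the $\frac{(c_t+c_h)^2}{2(c_t+c_h)^2}$ cancellation in Theorem~\ref{2extpositive}. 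I would also confirm the space bound ($\BO(1)$, since the gadget has a fixed finite size regardless of $P$) and the $2$-extensibility claim, the latter by checking that at each producible assembly at most two frontier locations are active — typically the single ``growing front'' of a path plus at most one branch location.

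The main obstacle, and the part requiring genuine care rather than routine bookkeeping, is designing the gadget so that the temperature-$2$ machinery actually enforces the intended assembly order without introducing spurious frontier locations or spurious terminal assemblies that break either the clean two-way partition or the $2$-extensibility bound. In the $|\sigma|=7$, $\tau=1$ construction the seven-tile seed does the geometric ``pre-wiring'' for free; with a single seed we must instead grow that scaffold, and every strength-$1$ glue that is momentarily exposed during that growth is a potential site for an unwanted attachment. So the real work is verifying that the only nondeterminism in the whole system is the intended $h$-versus-$t$ competition (appearing the intended number of times, with the intended correlation structure), after which the probability computation collapses to the same algebraic identity as before. Once the enumeration of producible and terminal assemblies is pinned down and shown to have exactly this structure, the $\frac12$ computation and the $\BO(1)$/$2$-extensible bounds follow as in Theorem~\ref{2extpositive}.
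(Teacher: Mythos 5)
Your proposal follows essentially the same route as the paper: grow a scaffold deterministically from the single seed at temperature $2$ (strength-$2$ glues forcing the build order) until a designated tile opens strength-$1$ cooperative binding sites, after which the only nondeterminism is the $h$-versus-$t$ competition, and the probability computation then reduces to the same $\tfrac{1}{2}$ identity as in Theorem~\ref{2extpositive}. The paper's proof is exactly this, with the concrete tile set given in Figure~\ref{seed}; your noted obstacle (preventing premature attachments at exposed strength-$1$ glues) is precisely what the glue-strength assignment there is designed to handle.
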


\begin{proof}
Our tile set is shown in Figure \ref{seed}. Without loss of generality, assume $\sigma$ sits at position $(0,0)$. Until the tile labeled $S$ (see Figure \ref{seed}) is placed, the assembly process is deterministic. Upon attachment of $S$, cooperative binding locations allow the attachment of tiles $h$ and $t$ nondeterministically. We denote the assemblies following the placement of $S$ similarly to the proof of Theorem~\ref{2extpositive}. We refer to assemblies containing tile $S$ by the labels of tiles in positions $(1,-1), (1,0)$ and $(2,0)$ as $\_\_t, \_\_h, \_ht$ and so forth. Reflecting the analysis shown in Theorem~\ref{2extpositive}, we have $\probterm{\Gamma \rightarrow hht}{P} = \frac{1}{2}$ for all concentration distributions $P$, which implies $\probterm{\Gamma \rightarrow htt}{P} = \frac{1}{2}$ as there are two terminal assemblies. \hfill \qed
\end{proof}

\begin{figure}[t]
    \centering
    \includegraphics[scale=2.7]{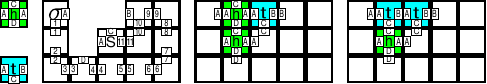}
    \caption{$T$ is shown. Our seed, labeled $\sigma$, begins a deterministic attachment process ending with the placement of the tile labeled $S$. Glues labeled $\{1, 2, 3,\dots, 11\}$ are of strength $2$. Glues labeled $\{A,B,C,D\}$ are of strength $1$, ensuring that the nondeterministic attachments of tiles $h$ and $t$ do not begin until the cooperative binding locations are opened by placement of the tile labeled $S$. The nondeterministic sequence of attachments following the placement of $S$ is similar to that of Theorem~\ref{2extpositive}.}
    \label{seed}
\end{figure}

%\begin{figure}
 %   \centering
  %  \includegraphics[scale=2]{bounded/bounded_tree_single}
   % \caption{The nondeterministic sequence of attachments following the placement of $S$ shown in Figure {\ref{seed}}. The tree resembles that of Figure {\ref{bounded_tree}}, so a similar proof can be used to show that the system meets our robust fair coin flip definition. Glues labeled ${A,B,C,D}$ are of strength $1$.}
   % \label{bts}
%\end{figure} 

\section{Robust Simulation of Randomized Linear Assemblies}\label{sec:linear}
% -*- root: ../fair_coin_flipping.tex -*-
%In this section we define what it means for the aTAM to simulate randomized linear assemblies at a scale factor $r$.

As an application of the primitive shown in Theorem \ref{2extpositivesingleseed}, we show that a class of randomized linear aTAM tile assembly systems can be simulated in a concentration robust manner with a minor scale factor.

We first briefly describe a scale $(m,n)$-simulation of a given tile system, based on the block replacement schemes of~\cite{ChenGoel04}.  Consider an aTAM system $\Gamma = (T,\sigma,\tau)$ and a proposed simulator system $\Gamma'=(T',\sigma',\tau')$.  Now consider the mapping from $\termset{\Gamma}$ to $\termset{\Gamma'}$ obtained by replacing each tile in an assembly $A\in\termset{\Gamma}$ with a rectangular $m \times n$ block of tiles over $U$, according to some fixed $m \times n$ block mapping $R$.  If there exists such a mapping $M$ from $\termset{\Gamma}$ to $\termset{\Gamma'}$ that is bijective, then we say that $\Gamma'$ simulates the production of $\Gamma$ at scale factor $(m,n)$.  Further, we say that $\Gamma$ \emph{robustly simulates} $\Gamma'$ for concentration distribution $P$ if for all terminal assemblies $A \in \termset{\Gamma}$, $\probterm{\Gamma \to A}{P} = \probterm{\Gamma' \to M(A)}{C}$ for all concentration distributions $C$ over $T'$, i.e., $\Gamma'$ produces terminal assemblies with probability independent of concentration assignment, and with exactly the same probability distribution as the concentration dependent system it simulates.

We now define a class of linear assembly systems for which we can construct robust, concentration independent simulations.

\begin{definition}[Unidirectional two-choice linear assembly systems]
A tile system $\Gamma$ is a \emph{unidirectional two-choice linear assembly system} iff:
  \begin{enumerate}
  \item $\Gamma$ is $1$-extensible,
  \item $\forall \alpha \in \prodset{\Gamma}, |F(\alpha,\Gamma)| \leq 2$,
  \item $\forall \beta \in \prodset{\Gamma}, \beta$ is a $1 \times n$ line for some $n \in \mathbb{N}$.
  \end{enumerate}
\end{definition}

\begin{theorem}\label{theorem:simulate}
For any unidirectional two-choice linear assembly system $\Gamma = (T,\sigma,2)$ in the aTAM, there is an aTAM system $\Gamma_s = (T',\sigma',\tau')$ that robustly simulates $\Gamma$ for the uniform concentration distribution at scale factor $5\times 4$ with $|T'| = 20|T|$.
\end{theorem}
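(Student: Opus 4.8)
The plan is to replace each tile of $\Gamma$ by a $5\times 4$ block in which the two possible "next tile" choices of the linear system are routed through the robust fair coin flip gadget of Theorem~\ref{2extpositivesingleseed}. Since $\Gamma$ is unidirectional, two-choice, and produces only $1\times n$ lines, at every step of assembly the frontier consists of at most two candidate tiles, say the "left" choice $t_0$ and the "right" choice $t_1$ that could attach to extend the current line; which one actually attaches in $\Gamma$ is governed by the concentration-dependent quantity $TRANS$, and we want the simulator to make this a true fair coin flip instead. First I would fix a block mapping $R$ sending each tile type $a\in T$ to a specific $5\times 4$ arrangement of tiles over $U$, with designated input glues on one short side and output glues on the other, so that a block's growth is triggered by its predecessor block's completion. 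Inside each block, the deterministic "spine" tiles lay down the scaffold on which a copy of the single-seed coin flip gadget from Theorem~\ref{2extpositivesingleseed} runs; the gadget's two terminal outcomes ($hht$ vs.\ $htt$) are wired to expose, respectively, the input glue for the $R$-block of $t_0$ or the input glue for the $R$-block of $t_1$. Because Theorem~\ref{2extpositivesingleseed} guarantees each outcome has probability exactly $1/2$ for \emph{all} concentration distributions, the simulator's choice at each block is an unbiased, concentration-independent coin flip.

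Next I would verify that this yields a genuine scale-$(5,4)$ simulation. I would define the map $M:\termset{\Gamma}\to\termset{\Gamma_s}$ by applying $R$ block-by-block along the line, and argue $M$ is a bijection onto $\termset{\Gamma_s}$: every producible assembly of $\Gamma_s$ is, by construction, a chain of completed blocks whose "exit choices" trace out a unique producible line of $\Gamma$, and conversely each terminal line of $\Gamma$ lifts to exactly one terminal assembly of $\Gamma_s$ (the gadget geometry must be arranged so that no spurious attachments or premature growth occur — this is where cooperative binding at $\tau=2$ and careful glue disjointness across distinct blocks and distinct tile types are used, exactly as the strength-$2$/strength-$1$ separation in Figure~\ref{seed} prevents the nondeterministic $h,t$ attachments from firing early). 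The tile complexity bookkeeping gives $|T'| = 20|T|$: the $5\times 4 = 20$ positions of each block, with one fresh tile type per (tile type of $T$, position-in-block) pair, since glues can be made block-local.

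The probabilistic equivalence is then the crux. I would prove by induction on the length of the assembly that for every terminal line $A\in\termset{\Gamma}$, $\probterm{\Gamma_s \to M(A)}{C} = \probterm{\Gamma \to A}{P}$ where $P$ is the uniform distribution and $C$ is arbitrary. The key point: in $\Gamma$ under the uniform distribution, at a frontier with two choices each $TRANS$ value is $\tfrac{1}{2}$ (both tile types have equal concentration), so $\probterm{\Gamma\to A}{P}$ is $2^{-k}$ where $k$ is the number of two-way branch points along $A$ (frontiers of size $1$ contribute factor $1$); and in $\Gamma_s$ each block corresponding to a branch point contributes an independent fair-coin factor $\tfrac12$ by Theorem~\ref{2extpositivesingleseed}, while each block corresponding to a forced step contributes factor $1$ (deterministic block growth). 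Matching these factor-by-factor along the line gives the equality. One subtlety to handle carefully is that the Markov-chain analysis in $\Gamma_s$ must confirm that the interleaving of growth across several in-progress blocks does not change any transition probability — because within a block the only nondeterminism is the isolated coin-flip gadget and everything else is deterministic, the ordering of attachments across blocks is irrelevant to the outcome distribution, so the per-block independence holds.

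The main obstacle I anticipate is the geometric design of the block gadget: fitting a faithful copy of the Theorem~\ref{2extpositivesingleseed} gadget (which itself needs room for its $\sigma$-to-$S$ deterministic prefix plus the nondeterministic $h/t$ region) into a $5\times 4$ footprint while also (i) providing clean input/output interfaces on opposite sides so blocks chain into a line, (ii) routing the two gadget outcomes to two distinct output glues selecting $R(t_0)$ vs.\ $R(t_1)$, and (iii) keeping all internal glues block- and type-local so that $|T'|=20|T|$ exactly and no cross-block or cross-type misassembly is possible. Establishing $1$-extensibility-like control (no extraneous frontier locations that would break the bijection or leak probability) within each block, under temperature $2$ cooperative binding, is the delicate verification; the probability computation itself is then routine given the gadget's correctness.
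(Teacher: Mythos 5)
Your proposal follows essentially the same route as the paper: replace each tile of the two-choice linear system with a $5\times 4$ block, embed the single-seed robust fair coin flip gadget of Theorem~\ref{2extpositivesingleseed} at each undecided (two-frontier) step so the choice is made with probability exactly $1/2$ for every concentration distribution, chain the blocks along the line, and count $5\cdot 4=20$ tile types per simulated tile to get $|T'|=20|T|$. The paper's proof is in fact briefer than yours, deferring the geometric wiring to Figure~\ref{simulation}, so your added remarks on the bijection, glue locality, and block-wise independence are consistent elaborations rather than a different argument.
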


\begin{figure}[t!]
    \centering
    \includegraphics[width=.75\textwidth]{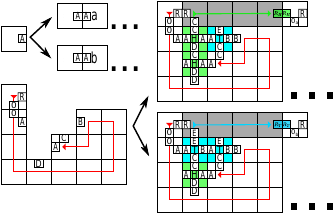}
    \caption{A simulation of one non-deterministic linear tile attachment. Each non-determinstic attachment will require a $5 \times 4$ robust coin flip gadget shown in Fig. \ref{seed}. The assembly continues after simulating an attachment by building another $5 \times 4$ robust coin flip gadget, building a deterministic $5 \times 4$ block, or terminating.}
    \label{simulation}
\end{figure}

\begin{proof} \label{app:simulate}
Let $\Gamma = (T, \sigma, \tau)$ be a unidirectional two-choice linear assembly system. Define an \emph{undecided assembly} to be any assembly $\alpha \in \prodset{\Gamma}$ such that $|F(\alpha,\Gamma)| = 2$. For each undecided assembly, we will construct a gadget utilizing the technique in Theorem~\ref{2extpositivesingleseed}. We call the two tiles of an undecided assembly's frontier $h$ and $t$. Consider $\alpha_h = \alpha \cup h$ and $\alpha_t = \alpha \cup t$. We simulate $\Gamma$ in reference to a uniform concentration distribution, so $\alpha$ transitions to $\alpha_h$ with probability $1/2$ and to $\alpha_t$ with probability $1/2$. Figure \ref{simulation} shows an example of utilizing a $5\times4$ gadget in $\Gamma_s$ to simulate the transition from $\alpha$ to $\alpha_h$ or $\alpha_t$. By application of Theorem \ref{2extpositivesingleseed}, the gadget will grow into one of two possible states with probability $1/2$ for any concentration distribution. By chaining the gadgets together we can robustly simulate the nondeterministic attachments in $\Gamma$. Each tile from the original construction is simulated by a $5\times 4$ block of tiles, and therefore $|T'| = 20|T|$. \hfill \qed
\end{proof}

As a corollary to Theorem~\ref{theorem:simulate}, we can create a tile system to build an expected length $n$ assembly for all concentration distributions with $\BO(\log n)$ tile complexity. First, we will prove that there is no aTAM tile system which generates linear (width-$1$) assemblies of expected length $n$ for all concentration distributions (\cite{ChaGopRei12} showed that this is possible for the uniform concentration distribution).

\begin{theorem}\label{thm:expected_length_neg}
There does not exist an aTAM tile system which generates an assembly of width-$1$ and expected length $n$ for all concentration distributions with
less than $n$ tile complexity.
\end{theorem}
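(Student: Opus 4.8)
The plan is to argue by a counting/pigeonhole argument on the structure of width-$1$ (linear) assembly systems. Suppose for contradiction that $\Gamma = (T,\sigma,\tau)$ is an aTAM system with $|T| < n$ whose terminal assemblies all have width $1$ and whose expected length equals $n$ for every concentration distribution. Since every producible assembly is a $1\times k$ line, the assembly process is essentially a walk: from the seed, growth proceeds (possibly in two directions, but along a line), and at each step the frontier consists of tiles that may attach at one of at most two endpoints. The key structural observation is that because the assembly is a line and $|T|<n$, any producible line of length $\geq n$ must repeat a tile type in a way that exposes a ``pumpable'' situation — more carefully, I would track the pair (or small tuple) of glues exposed at the growing end(s). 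There are only $|T|$ tile types, so among sufficiently long producible assemblies some configuration of exposed glue(s) recurs.

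First I would set up the Markov-chain view from the \emph{Probability in Tile Assembly} subsection: the producible assemblies form states, and expected length is $\sum_{\alpha \in \termset{\Gamma}} |\alpha|\cdot\probterm{\Gamma\to\alpha}{P}$. Next I would show that if every terminal line has length less than... no — rather, I would split into cases. Case 1: every terminal assembly has length $< n$. Then expected length is $< n$ for all $P$, contradiction. So some terminal assembly has length $\geq n$. Case 2: since a terminal line of length $\geq n$ exists and $|T| < n$, by pigeonhole two distinct positions along that line (at the same growing end, in the same growth orientation) carry the same tile type and hence expose the same glue; this yields a nontrivial ``loop'' in the state graph that can be traversed — i.e., there is a producible assembly $\alpha$ and a proper sub-line that can be inserted/repeated, giving producible lines of unbounded length, OR it forces the system to be infinite. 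The crux is then: because the growth at a single end from a given exposed glue is governed only by that glue (width $1$, so no cooperation from the side the line already occupies), the continuation behavior from the repeated configuration is identical to the earlier one. That means there is a reachable state from which, with some fixed positive probability bounded away from $0$ (depending on $P$), the assembly extends by at least the loop length and returns to an equivalent state; iterating, the length is unbounded with positive probability unless the loop has probability $0$ of being taken — but every tile has positive concentration, so every available attachment has positive transition probability. Hence the expected length is actually infinite (or the system is infinite, so ``expected length $n$'' is not well-defined), contradicting that it equals $n$.

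The main obstacle I anticipate is handling the two-endpoint (bidirectional) case cleanly and making the ``repeated exposed glue $\Rightarrow$ pumpable loop'' step fully rigorous, including the possibility that the repeated configuration occurs on a growth path that the process might not actually follow with the adversarial $P$. I would address the latter by choosing $P$ to make the ``dangerous'' attachment (the one leading into the repeated configuration) have relatively large concentration, so the bad path is taken with non-vanishing probability at each opportunity; since there are infinitely many opportunities along the loop, a Borel–Cantelli / geometric-series estimate shows the line grows without bound with positive probability, forcing infinite expected length. A secondary subtlety is ensuring the ``loop'' genuinely produces \emph{longer} lines rather than merely an alternative finite line — this is guaranteed because the two occurrences of the same tile type are at different coordinates along the same line with the line extending strictly further at the second occurrence, so the segment between them has positive length and can be repeated. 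I would also note the boundary count is tight: $n$ tile types trivially suffice to count to $n$ deterministically, so the bound ``less than $n$'' cannot be improved, which is consistent with the positive result for the uniform distribution in~\cite{ChaGopRei12}.
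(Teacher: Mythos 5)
Your overall strategy---find a terminal line of length at least $n$, use pigeonhole ($|T|<n$) to locate a repeated tile type giving a cycle, then pick adversarial concentrations to pump that cycle---is the same strategy the paper uses, but the step that converts the cycle into a contradiction is wrong as stated. You claim that because the loop can be re-entered with ``non-vanishing probability at each opportunity,'' a Borel--Cantelli/geometric-series argument gives unbounded growth with positive probability and hence infinite expected length. With a fixed concentration distribution, the per-visit probability of \emph{not} continuing the loop is some fixed $\epsilon>0$ (every attachable tile has positive concentration), so the probability of traversing the loop forever is $\lim_k(1-\epsilon)^k=0$ and the number of traversals is geometric with finite mean; nothing forces infinite expected length, and the expected gain could even be negligible. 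The correct move is quantitative: choose concentrations so extreme that the intended next attachment wins at \emph{every} step with probability at least $1/(1+n^{-99})$, so that with probability $>0.9$ the assembly reaches length at least $n^2$, whence the expected length is at least $0.9\,n^2>n$ --- a finite but too-large bound, not an infinite one.

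The second gap is the one you flagged but did not close: guaranteeing that the process actually reaches, and then repeatedly traverses, your chosen loop under the adversarial $P$. Boosting the concentration of the tile ``leading into the repeated configuration'' can just as easily derail the walk earlier, since that same tile type (or another boosted type) may be attachable at an earlier position of the line, sending growth into a different, possibly short, terminal assembly; then the loop is reached only with probability exponentially small in your concentration ratios, and you cannot conclude that the expectation under your $P$ differs from $n$. The paper closes this with a global selection argument: among all length-$n$ assemblies it picks the one whose first cycle $t_i\cdots t_m t_i$ is minimal under a lexicographic order on the pair $(m,i)$, and assigns cascading concentrations $c_{j+1}=c_j/n^{100}$ by tile index. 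Minimality guarantees that along the prefix $t_1\cdots t_m$ no tile of smaller index can ever attach (that would create an earlier cycle), so the highest-concentration attachable tile is always exactly the intended next tile, and at $t_1\cdots t_m$ it is $t_i$; this is precisely what makes every step succeed with probability at least $1/(1+n^{-99})$ and lets the pumping estimate go through. Without some such device, your proof does not establish that the expected length under your chosen distribution exceeds (or even differs from) $n$.
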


% \begin{proof}
% Towards a contradiction, assume a self-assembly system can
% generate a linear assembly with expected length $n$ and
% uses at most $k<n$ tiles. There is at least one assembly $S$ that
% is of length at least $n$. Let $S=t_1\cdots t_{i-1}t_i\cdots
% t_mt_i \dots$, where $t_i\cdots t_mt_i$ is the first cycle that appears
% in $S$ since there are less than $n$ tiles. We define the
% concentration of the types of tiles as follows:

% Let $c_1=1$, $c_j=c_{j-1}/n^{100}$ for $j=1,\dots,k$. The
% concentration of each type $t_i$ is ${c_i\over c_1+c_2+\cdots+c_k}$.
% The concentration assigments ensure that with probability at least $\left ( {1\over 1+{1\over
% n^{99}}}\right )^n$, the assembly $t_1\cdots t_{i-1}t_i$
% $\cdots t_mt_i$, or one at least as long, will
% be generated. With probability at least $\left ({1\over 1+{1\over
% n^{99}}}\right ) ^{n^3}$ $>0.9$, an assembly at least as long as $t_1\cdots t_{i-1}(t_i\cdots t_mt_i)^{n^2}\dots$
% will be generated, which has length at least $n^2$. This contradicts that the expected length is $n$. \hfill \qed
% \end{proof}

\begin{proof}
Towards a contradiction, assume a self-assembly system can generate a linear assembly with expected length $n$ and uses at most $k < n$ tiles where $n \ge 2$. There is at least one assembly that is of length at least $n$. For all possible assemblies of length $n$, since $k < n$, they must have the form $t_1\cdots t_{i-1}t_i\cdots t_mt_i\cdots t_z$ with the first cycle $t_i\cdots t_mt_i$ for some $i,m$, which may be different for each assembly.  Define the ordering on the pair $(m,i)$ between all assemblies as $(m_1,i_1)<(m_2,i_2)$ iff $(m_1<m_2)$ or ($m_1=m_2$ and $i_1<i_2$).
Let $S = t_1\cdots t_{i-1}t_i\cdots t_mt_i\cdots t_z$ be the assembly with the minimal pair $(m,i)$ under our ordering, where $t_i \cdots t_mt_i$ is the first cycle that appears in $S$.

Since $(m,i)$ is the minimal pair for the choice of $S$, it is impossible
that the system generates $t_1\cdots t_{j-1}t_j\cdots
t_{m'}t_j\cdots $ for some $j<i$ and $m'\le m$. For $u< m$, tile
$t_k$ with $k<u$ cannot be attached to $t_1\cdots t_{u}$ to form
$t_1\cdots t_{u}t_k$. Otherwise, the system could generate $t_1\cdots t_k\cdots t_ut_k$.

We define the
concentration of the types of tiles as follows: Let $c_1 = 1, c_j = c_{j-1}/n^{100}$ for $j = 1,\cdots k$. The concentration of each type $t_i$ is ${c_i\over c_1+c_2+\cdots +c_k}$. Consider the assembly $t_1\cdots t_{j}$ with $j<m$, for those tiles $t_u$ that can be attached to the assembly, it must be the case that $u>j$. Therefore, the probability that $t_{j+1}$ is attached to it is at least
${c_{j+1}\over c_{j+1}+\sum_{u>j+1}c_u}\ge {1\over 1+n\cdot {1\over
n^{100}}}={1\over 1+ {1\over n^{99}}}$.

Consider the assembly $t_1\cdots t_{i-1}t_i\cdots t_m$ and the tile
$t_u$ with the smallest $u$ that can be attached. It must be the case that
$u=i$. Otherwise, it violates the condition that $(m,i)$ is the
minimal pair for $S$. Therefore, the probability that $t_{i}$ is attached
to it is at least ${c_i\over c_i+\sum_{u>i}c_u}\ge {1\over 1+n\cdot
{1\over n^{100}}}={1\over 1+ {1\over n^{99}}}$.

The concentration assignments ensure that with a probability of at least $\left ( {1 \over 1+ {1\over n^{99}}}\right )^{n}$ , the assembly $t_1\cdots t_{i-1}t_i\cdots t_mt_i$, or one at least as long, will be generated. This means that with a  probability of at least $\left ({1 \over 1+ {1\over n^{99}}}\right )^{n^3}> 0.9$, an assembly at least as long as $t_1\cdots t_{i-1}(t_i\cdots t_mt_i)^{n^2}$ will be generated, which has length at least $n^2$. This contradicts that the expected length is $n$. \hfill \qed
\end{proof}

We now contrast the width-1 impossibility result of Theorem~\ref{thm:expected_length_neg} with a result showing that width-4 linear assemblies do allow for efficient growth to expected length $n$ in a concentration independent manner.  To achieve this, we apply Theorem~\ref{theorem:simulate} to the unidirectional two-choice linear assembly system presented in~\cite{ChaGopRei12}, which yields the following result.

\begin{corollary}\label{corollary:expected_length}
There exists an aTAM tile system $\Gamma = (T,\sigma,\tau)$ which terminates in a width-$4$ expected length $n$ assembly for all concentration distributions. $|T| = \BO(\log n)$.
\end{corollary}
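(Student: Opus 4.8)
The plan is to derive the corollary as an essentially immediate application of Theorem~\ref{theorem:simulate} to the linear assembly construction of Chandran, Gopalkrishnan, and Reif~\cite{ChaGopRei12}. First I would recall their system: for a target length $n$, \cite{ChaGopRei12} exhibits a temperature-$2$ aTAM system $\Gamma_{CGR}=(T,\sigma,2)$ with $|T|=\BO(\log n)$ whose producible assemblies are all horizontal $1\times\ell$ lines, whose frontier is always a single deterministic extension tile or a nondeterministic ``continue/stop'' pair (so of size at most $2$), and which is $1$-extensible; under the uniform concentration distribution its terminal assemblies have expected length $n$. The first task is to check that these three facts are exactly clauses (1)--(3) of the definition of a \emph{unidirectional two-choice linear assembly system}, so that $\Gamma_{CGR}$ is a legal input to Theorem~\ref{theorem:simulate}.

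Next I would invoke Theorem~\ref{theorem:simulate} to obtain a system $\Gamma_s=(T',\sigma',\tau')$ that robustly simulates $\Gamma_{CGR}$ for the uniform distribution at scale factor $5\times 4$, with $|T'|=20|T|=\BO(\log n)$. The simulation supplies a bijection $M:\termset{\Gamma_{CGR}}\to\termset{\Gamma_s}$ with $\probterm{\Gamma_{CGR}\rightarrow A}{U}=\probterm{\Gamma_s\rightarrow M(A)}{C}$ for every concentration distribution $C$ over $T'$, where $U$ is the uniform distribution. Since each $A\in\termset{\Gamma_{CGR}}$ is a $1\times\ell$ line and $M$ replaces every tile by a $5\times 4$ block, $M(A)$ is a width-$4$ assembly of geometric length $5\ell$. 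Hence, for every $C$,
\[
\sum_{A} 5\ell(A)\cdot\probterm{\Gamma_s\rightarrow M(A)}{C} \;=\; 5\sum_{A} \ell(A)\cdot\probterm{\Gamma_{CGR}\rightarrow A}{U},
\]
where $\ell(A)$ is the length of $A$; the left side is the expected length of $\Gamma_s$ under $C$, and the right side is $5$ times the uniform expected length of $\Gamma_{CGR}$, hence independent of $C$. Applying this with the \cite{ChaGopRei12} system tuned to uniform expected length $\lfloor n/5\rfloor$ (still $\BO(\log n)$ tile types, since $\log\lfloor n/5\rfloor=\Theta(\log n)$) and then appending a fixed deterministic width-$4$ block of $n-5\lfloor n/5\rfloor\in\{0,1,2,3,4\}$ columns yields a width-$4$ aTAM system with expected length exactly $n$ for all concentration distributions, still using $\BO(\log n)+\BO(1)=\BO(\log n)$ tile types. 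One should also note that finiteness is preserved because $M$ is a bijection onto $\termset{\Gamma_s}$ and each simulation gadget (by Theorem~\ref{2extpositivesingleseed}) terminates.

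The tile-complexity bookkeeping and the expected-length algebra above are routine. The one point requiring genuine care — and the step I expect to be the main obstacle — is the reconciliation in the first paragraph: verifying that the published \cite{ChaGopRei12} construction, as actually stated, literally satisfies all three clauses of the unidirectional two-choice definition, in particular that \emph{every} producible assembly (not merely each terminal one) is a $1\times n$ line and that the frontier never exceeds size $2$, since Theorem~\ref{theorem:simulate} is stated only for systems of that restricted form. If their system has, e.g., a brief non-linear seed-formation phase, I would first transform it into an equivalent system of the required shape (absorbing the seed structure into $\sigma'$ at simulation time) before applying the simulation theorem.
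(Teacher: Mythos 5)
Your proposal is correct and follows essentially the same route as the paper's proof: set $m=\lfloor n/5\rfloor$, apply Theorem~\ref{theorem:simulate} to the expected-length-$m$ unidirectional two-choice system of~\cite{ChaGopRei12}, and pad the remaining $n \bmod 5$ length deterministically, giving $|T|=\BO(\log n)$. Your added care in verifying that the~\cite{ChaGopRei12} system literally meets the unidirectional two-choice definition, and your explicit expected-length calculation via the probability-preserving bijection, are points the paper simply asserts, but they do not change the argument.
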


\begin{proof} \label{app:coro_expected_length}
 Let $m$ be $\left \lfloor{\dfrac{n}{5}}\right \rfloor$. Consider $\Gamma = (T,\sigma,\tau)$ to be a robust simulation at scale factor $5\times 4$ of a unidirectional two-choice linear assembly system that terminates in an expected length $m$ linear assembly using $\mathcal{O}(\log m)$ tile types. Note that such a unidirectional two-choice linear assembly system exists as shown in~\cite{ChaGopRei12} and can be robustly simulated as shown by Theorem \ref{theorem:simulate}. If $5m = n$, then $\Gamma$ terminates in an expected length $n$ assembly with width-$4$; otherwise, we add $n \mod 5$ length deterministically. Since our scale factor is constant, $|T| = \mathcal{O}(\log n)$. \hfill \qed
\end{proof}

% \begin{theorem}
% There is an aTAM algorithm with robust concentration such that given
% $5\le m\le n$, it generates a $m\times n'$ rectangle with $n'$ of
% expectation.
% \end{theorem}

% \begin{proof}[Sketch]
% First generate a $m\times m$ square. Extend one edge to a linear
% assembly of expected length of $n$ using coin flip of with at most
% $5$. Fill the rest area to form a rectangle of size $m\times n'$.

% \end{proof}

\section{Robust Fair Coin Flipping at Temperature 1}\label{sec:boundedt1}

The previous construction for robust fair coin flipping used temperature two so that the gadget could use cooperative binding for simplicity.  However, here we show that it can be done in the aTAM at temperature one if after the coin flip the geometry ensures that the gadget is contained.  In order to do this, we use three additional tiles which relay the message and give either an $h$ or $t$ glue as the output.  Figure \ref{fig:boundedt1} shows the tile set that flips the coin and the build order of the gadget that must place its pieces in order to work correctly.  The basic construction is the same as before.

\begin{theorem}\label{2extt1}
There exists a $\BO(1)$ space 2-extensible robust fair coin flip tile system $\Gamma=(T,\sigma,\tau)$ in the aTAM with $|\sigma|=1$ and $\tau=1$.
\end{theorem}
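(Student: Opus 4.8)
The plan is to replicate the combinatorial core of Theorem~\ref{2extpositive} while replacing the temperature-2 cooperative binding of Theorem~\ref{2extpositivesingleseed} with a temperature-1 geometric argument. First I would describe the tile set $T$ of Figure~\ref{fig:boundedt1}: a single seed tile $\sigma$ that initiates a deterministic path of attachments, at the end of which a ``decision'' tile is placed that exposes two competing glues for the nondeterministic tiles $h$ and $t$. As in the earlier constructions, the gadget is designed so that the first coin-flip tile placed (say $t$) re-exposes a binding site that favors the opposite tile ($h$) on the next step, and vice versa; this is exactly the mechanism that makes the three-tile assembly terminate as either $hht$ or $htt$. The three extra tiles mentioned in the text serve to ``relay'' the outcome: after the $\{hht, htt\}$ sub-assembly completes, a short deterministic chain reads off which pair was placed and presents a single $h$-glue or $t$-glue as the gadget's output, so the gadget has exactly two terminal assemblies distinguished by that output glue.

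The key steps, in order, would be: (1) specify the glue strengths (all strength~1, so that at $\tau=1$ each attachment is driven by a single matching glue) and verify that the deterministic prefix from $\sigma$ has a unique producible extension up to the decision tile; (2) verify $2$-extensibility — the only point where the frontier has size~$2$ is immediately after the decision tile, and every other producible assembly has a frontier of size at most~$1$ because the surrounding geometry blocks all but one growth site; (3) carry out the same transition-probability computation as in Theorem~\ref{2extpositive}, enumerating the assembly sequences leading to $hht$ versus $htt$ and showing each total probability is $\tfrac12$ for arbitrary $c_h, c_t$ (this is essentially identical algebra: $\frac{c_t^2 + 2c_tc_h + c_h^2}{2c_t^2 + 4c_tc_h + 2c_h^2} = \tfrac12$); (4) argue that the relay tiles attach deterministically after the flip and do not introduce new nondeterminism or new terminal assemblies; and (5) conclude that the two terminal assemblies (heads-output and tails-output) partition $\termset{\Gamma}$ with probability $\tfrac12$ each, for all $P$, in $\BO(1)$ space.

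The main obstacle is step~(2) together with the implicit geometric claim that, at temperature~1, no \emph{unintended} attachments occur — in particular that the coin-flip tiles $h$ and $t$ cannot bind anywhere except at the two intended cooperative-replacement sites, and that the gadget is ``contained'' so that its growth does not interfere with whatever is chained after it (as needed for the linear-assembly application). At $\tau=2$ this containment was automatic from cooperative binding; at $\tau=1$ one must show that the physical layout of the tiles in Figure~\ref{fig:boundedt1}, following the prescribed build order, leaves no exposed strength-1 glue that would admit a stray tile and break either the determinism of the prefix/relay portions or the exact $\tfrac12$ split. I would handle this by a careful case analysis over the producible assemblies, showing the frontier is forced at each step, rather than by any new idea — the novelty is entirely in the gadget geometry, and the probability analysis is inherited verbatim from Theorem~\ref{2extpositive}.
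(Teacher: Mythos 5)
Your proposal matches the paper's own treatment: the paper establishes Theorem~\ref{2extt1} precisely by exhibiting the gadget of Figure~\ref{fig:boundedt1} --- a deterministic single-seed scaffold at $\tau=1$ whose geometry contains the flip, the same $hht$/$htt$ competition as in Theorem~\ref{2extpositive}, and three relay tiles exposing an $h$ or $t$ output glue --- with the probability-$\tfrac12$ algebra inherited from Theorem~\ref{2extpositive}. The verification steps you outline (forced frontier along the deterministic prefix, $2$-extensibility only at the decision point, no stray strength-$1$ attachments) are exactly the checks the paper leaves implicit in the figure, so your approach is essentially identical.
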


\begin{figure}[t]
    \centering
    \includegraphics[width=.75\textwidth]{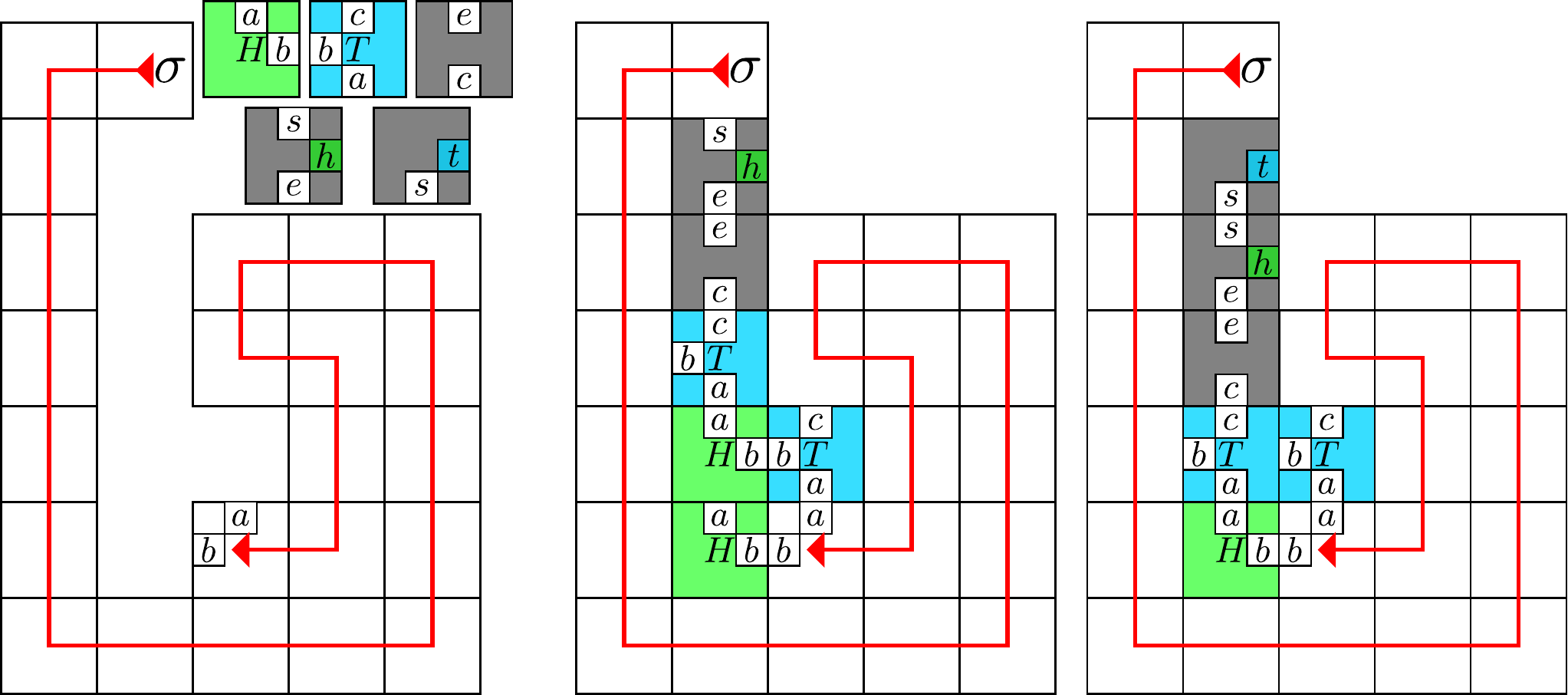}
    \caption{A robust fair coin flip gadget at temperature one. The image on the left shows the deterministic gadget and the order in which the tiles attach until it opens up a spot of the coin flip to occur.  The coin flipping tiles are shown above.  The middle figure shows a head flip which exposes the $h$ glue on top, and the right figure is the tail flip which exposes a $t$ glue on top. }
    \label{fig:boundedt1}
\end{figure}

Further, we can do expected length linear assemblies at temperature one (similar to Section \ref{sec:linear}), but due to the geometric constraints of the gadgets, the needed constant width is six.  We use the gadget shown in Figure \ref{fig:boundedt1}, but we turn it on its side in order to decrease the width (from eight to six).  This is shown in Figure \ref{fig:lint1}.  The original construction of expected length linear assemblies in \cite{ChenGoel04} only relied on $\tau=1$, and thus our gadget is able to convert their system to be concentration independent while keeping the temperature constraint and only requiring a constant width of six.

\begin{theorem}\label{theorem:simulate1}
For any unidirectional two-choice linear assembly system $\Gamma = (T,\sigma,1)$ in the aTAM, there is an aTAM system $\Gamma_s = (T',\sigma',\tau')$ that robustly simulates $\Gamma$ for the uniform concentration distribution at scale factor $9 \times 6$ with $|T'| = \BO(|T|)$.
\end{theorem}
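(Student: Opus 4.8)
The plan is to mirror the block-replacement argument already used for Theorem~\ref{theorem:simulate}, but now accommodating the temperature-one coin-flip gadget of Theorem~\ref{2extt1} in place of the temperature-two gadget of Theorem~\ref{2extpositivesingleseed}. First I would recall that a unidirectional two-choice linear assembly system $\Gamma=(T,\sigma,1)$ grows a $1\times n$ line with at most two frontier locations at any time; call an assembly \emph{undecided} when $|F(\alpha,\Gamma)|=2$. Each tile of $T$ is assigned a block over the new tile set $U$: deterministic tiles (those that, once their predecessor is placed, have a forced successor) map to a deterministic $9\times 6$ block that simply propagates a single glue from its west edge to its east edge, and each undecided tile maps to a $9\times 6$ block that embeds a sideways copy of the temperature-one gadget from Figure~\ref{fig:boundedt1}. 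The gadget is ``turned on its side'' (Figure~\ref{fig:lint1}) precisely so that the deterministic lead-in tiles of the gadget, which must attach in a fixed order to geometrically enclose the flip region, fit within six rows while still presenting exactly one output glue ($h$ or $t$) on the east side of the block. The block mapping $R$ is then defined by these two block types, and the induced map $M:\termset{\Gamma}\to\termset{\Gamma_s}$ replaces tile $t_i$ in a terminal line by the corresponding block, threading the $h$/$t$ output of each coin-flip block into the west selector glue of the next block.

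Next I would verify that $M$ is a bijection onto $\termset{\Gamma_s}$ by an induction on the length of the assembled prefix: because the gadget is geometrically contained (Theorem~\ref{2extt1} guarantees $\BO(1)$ space and that growth cannot ``escape'' the gadget before a single $h$ or $t$ glue is exposed), the only nondeterminism in $\Gamma_s$ occurs inside coin-flip blocks, and each such block has exactly two terminal internal configurations — one exposing $h$, one exposing $t$ — in correspondence with the two frontier choices of the simulated undecided assembly. Deterministic blocks contribute no nondeterminism and always complete. Hence producible partial assemblies of $\Gamma_s$ that are ``block-aligned'' are in bijection with producible assemblies of $\Gamma$, and terminal ones with terminal ones. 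The probability computation is then inherited verbatim from Theorem~\ref{2extt1}: within each coin-flip block, $\Pr[\text{exposes }h]=\Pr[\text{exposes }t]=1/2$ for every concentration distribution $C$ over $U$, and since distinct blocks flip independently (the Markov chain factorizes over the sequence of blocks, each block's internal transitions depending only on tiles in that block), $\probterm{\Gamma_s\to M(A)}{C}=\prod 1/2 = \probterm{\Gamma\to A}{P}$ where $P$ is the uniform distribution over $T$ and the product is over undecided steps along $A$. This is exactly the robust-simulation condition. The tile count is $|T'|=\BO(|T|)$ since each of the two block types uses $\BO(1)$ tile types and we need a constant number of ``colored'' copies per simulated tile type to carry the identity of $t_i$ through; bundling gives $|T'| = c\,|T|$ for a fixed constant $c$ (the exact constant one can read off the $9\times 6$ block, analogous to the $20|T|$ bound in Theorem~\ref{theorem:simulate}).

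The main obstacle I anticipate is the geometric bookkeeping that makes the sideways gadget actually fit in width six while (i) not initiating the nondeterministic flip until the deterministic frame is in place, and (ii) exposing its output glue on the correct edge so that the next block can attach with strength $\geq 1$ at temperature one without any spurious attachments along the exposed gadget boundary. At $\tau=1$ one cannot use cooperative binding to gate the flip, so the containment must be enforced purely by geometry and by a carefully chosen attachment order — establishing rigorously that no producible assembly of $\Gamma_s$ ``leaks'' outside a block or attaches a block in the wrong orientation is the delicate part, and it is where the bulk of a full proof's case analysis over the tiles of Figure~\ref{fig:lint1} would go. Everything downstream of that — the bijection and the $1/2$-per-block probability identity — then follows the same template as Theorem~\ref{theorem:simulate}.
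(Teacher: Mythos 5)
Your proposal is correct and follows essentially the same route as the paper, which proves this theorem simply by invoking the block-replacement argument of Theorem~\ref{theorem:simulate} with the sideways temperature-one gadget of Figure~\ref{fig:lint1} substituted for the $5\times 4$ temperature-two gadget. The geometric-containment concern you flag is indeed the only $\tau=1$-specific point, and the paper handles it at the same level of detail (by reference to the gadget's construction in Theorem~\ref{2extt1} and the figures) rather than by an explicit case analysis.
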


\begin{proof}
This follows the same argument as Theorem \ref{theorem:simulate}.
\hfill \qed
\end{proof}

\begin{corollary}\label{corollary:expected_lengtht1}
There exists an aTAM tile system $\Gamma = (T,\sigma,\tau)$ which terminates in a width-$6$ expected length $n$ assembly for all concentration distributions at $\tau = 1$. $|T| = \BO(\log n)$.
\end{corollary}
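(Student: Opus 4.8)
The plan is to combine Theorem~\ref{theorem:simulate1} with the temperature-one expected-length-$n$ construction of~\cite{ChenGoel04}, in exact parallel to how Corollary~\ref{corollary:expected_length} was derived from Theorem~\ref{theorem:simulate} and the temperature-two construction of~\cite{ChaGopRei12}. First I would invoke the fact (stated in the paragraph preceding the corollary) that the linear-assembly construction of~\cite{ChenGoel04} already works at $\tau=1$ and is a unidirectional two-choice linear assembly system using $\BO(\log m)$ tile types whose terminal assemblies have expected length $m$ under the uniform concentration distribution. Then I would apply Theorem~\ref{theorem:simulate1} to this system to obtain a robust simulator $\Gamma_s$ at scale factor $9\times 6$, which is $\tau$-independent in the sense that each original nondeterministic attachment is replaced by a coin-flip gadget that resolves to either outcome with probability exactly $1/2$ for every concentration distribution.

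The key steps, in order: (1) fix the target length $n$ and set $m = \lfloor n/9 \rfloor$ so that the scaled assembly has length $9m \le n$ in the growth direction; (2) take the $\tau=1$ unidirectional two-choice linear system $\Gamma'$ from~\cite{ChenGoel04} with expected length $m$ and $\BO(\log m)$ tile types; (3) apply Theorem~\ref{theorem:simulate1} to get a robust simulator $\Gamma_s = (T',\sigma',1)$ at scale factor $9\times 6$ with $|T'| = \BO(|T'|_{\text{orig}}) = \BO(\log m) = \BO(\log n)$, which under any concentration distribution reproduces the exact terminal-assembly probability distribution of $\Gamma'$; (4) conclude that $\Gamma_s$ has width $6$ and expected length $9m$; (5) if $9m \ne n$, append a fixed deterministic strip of length $n \bmod 9$ (at most $8$ extra columns, a constant number of additional tile types) so the total expected length becomes exactly $n$. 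Since expectation is linear and the appended portion is deterministic, the expected length of the final system is $9m + (n \bmod 9) = n$, and this holds for all concentration distributions because the simulated core is concentration-independent by Theorem~\ref{theorem:simulate1}.

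The main thing to be careful about is the bookkeeping of the scale factor and width: Theorem~\ref{theorem:simulate1} gives a $9\times 6$ scaling, so width is $6$ (matching the corollary's claim) and each original tile becomes a block contributing $9$ to the length dimension; one must confirm that the deterministic "padding" strip can be attached at $\tau=1$ without interfering with the gadget geometry, which is immediate since it only needs a single strength-one glue handed off from the last block. The tile-complexity bound is the other point to verify: the constant blow-up factor from Theorem~\ref{theorem:simulate1} ($|T'| = \BO(|T|)$) composed with $\BO(\log m)$ tiles in the source system and the constant padding gadget yields $\BO(\log n)$ overall. No genuine obstacle arises here; the corollary is a routine consequence of the preceding theorem exactly as the proof sketch "This follows the same argument as Theorem~\ref{theorem:simulate}" and the proof of Corollary~\ref{corollary:expected_length} indicate, so the proof can be stated in a few lines mirroring the proof of Corollary~\ref{corollary:expected_length} with $5\times 4$ replaced by $9\times 6$ and $\tau=2$ replaced by $\tau=1$.
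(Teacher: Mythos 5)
Your proposal matches the paper's intended argument exactly: the paper proves this corollary by mirroring the proof of Corollary~\ref{corollary:expected_length}, replacing the $5\times 4$ gadget and Theorem~\ref{theorem:simulate} with the $9\times 6$ temperature-one gadget and Theorem~\ref{theorem:simulate1}, setting $m=\lfloor n/9\rfloor$, padding the remaining length deterministically, and noting the constant scale factor preserves $\BO(\log n)$ tile complexity. No discrepancies worth noting.
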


\begin{figure}[t]
    \centering
    \includegraphics[width=1.\textwidth]{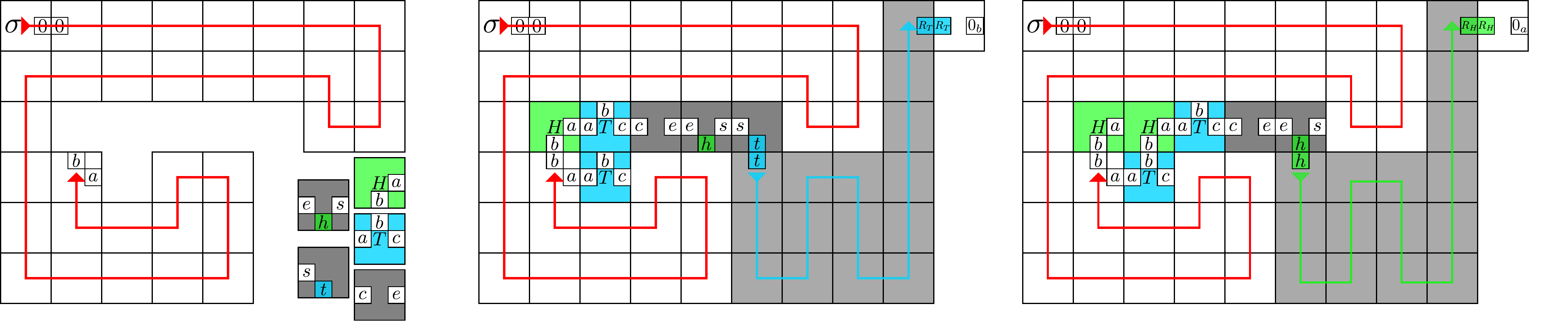}
    \caption{The temperature one coin flip gadget modified for the linear assemblies.  The left shows the gadgets and coin flip tiles.  The middle shows a tails being flipped and how the block fills in the remaining space and propogates the $t$ glue up to the top for the next gadget to attach.  The right image shows the same process when a head is flipped.}
    \label{fig:lint1}
\end{figure}

\section{Robust Random Number Generation in the aTAM}\label{sec:rng}
A natural direction following the robust fair coin-flip problem is \emph{robust random number generation} in the aTAM. The robust coin-flip solutions for the aTAM allow implementation of robust aTAM algorithms utilizing random bit generation. A more useful primitive, then, is the generation of random numbers within a given range. A generalization of our coin-flip problem definition is used to consider random number generation by aTAM systems.

\begin{definition}[Random Number Generation]We consider a finite tile system $\Gamma$ a \textbf{random $n$ generator} with respect to a concentration distribution $P$ iff the set of terminal assemblies in $\prodset{\Gamma}$ is partitionable into $n$ sets $X_1, X_2,...,X_n$ such that $\sum\limits_{x \in X_i} \probterm{\Gamma \rightarrow x}{P} = \sum\limits_{y \in X_j} \probterm{\Gamma \rightarrow y}{P}$ for $1\leq i,j \leq n$. We consider a finite tile system $\Gamma$ a \textbf{robust random $n$ generator with bias $b$} iff the set of terminal assemblies in $\prodset{\Gamma}$ is partitionable into $n$ sets $X_1, X_2,...,X_n$ such that $$\max \left \lvert \sum\limits_{x\in X_i}\probterm{\Gamma \rightarrow x}{C} - \sum\limits_{y\in X_j}\probterm{\Gamma \rightarrow y}{C} \right \rvert\ \leq b$$
over all $X_i,X_j \in X_1,\dots,X_n$ and $1\leq i,j \leq n$ and all concentration distributions $C$. A \textbf{fair robust random $n$ generator} is a robust random $n$ generator with bias $0$.
%$\sum\limits_{x \in X_i} \probterm{\Gamma \rightarrow x}{C} = \sum\limits_{y \in X_j} \probterm{\Gamma \rightarrow y}{C}$ for $1\leq i,j \leq n$ for all concentration distributions $C$.
\end{definition}

\subsection{Robust Random Generator for Powers of Two}
We start off with a corollary from Theorem~\ref{2extpositivesingleseed} to solve a class of the robust random number generation problems involving powers of two.

\begin{corollary}\label{cor:2s}
For $m\in \mathbb{Z^+}$, there exists a $\BO(m)$ space $2$-extensible fair robust random $2^m$ generator $\Gamma=(T,\sigma,2)$ in the aTAM with $|\sigma| = 1$ and $|T|=\BO(\log n)$.
\end{corollary}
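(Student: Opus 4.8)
The plan is to chain together $m$ independent copies of the robust fair coin flip gadget of Theorem~\ref{2extpositivesingleseed}, so that the sequence of $m$ coin outcomes encodes a uniformly random integer in $\{0,1,\dots,2^m-1\}$ written in binary. Concretely, I would start from the single-seed $\tau=2$ construction of Theorem~\ref{2extpositivesingleseed}: it has $|\sigma|=1$, uses $\BO(1)$ tiles, occupies $\BO(1)$ space, is $2$-extensible, and reaches one of two terminal-ish configurations (``heads'' or ``tails'') each with probability exactly $1/2$ for every concentration distribution $P$. The modification is to replace the two terminal assemblies of the gadget with two ``exit'' glues — an $h$-glue and a $t$-glue — each of which seeds a fresh copy of the same gadget placed in a disjoint region of the plane (say, offset by a fixed constant vector, with the offset direction possibly depending on the bit value so the copies do not collide). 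After $m$ gadgets have fired, the final exit glue instead seeds a short deterministic ``cap'' assembly. This yields a binary-tree-like family of producible assemblies of depth $m$; each root-to-leaf path corresponds to one of the $2^m$ bitstrings, and the cap can be labelled (or simply identified by its geometric location / glue history) with the value of that bitstring.

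The key steps, in order: (1) describe the gadget modification above and verify it stays $2$-extensible — this is immediate since at any time only the single currently-growing gadget has active frontier locations, and Theorem~\ref{2extpositivesingleseed}'s gadget is already $2$-extensible, with the deterministic cap being $1$-extensible; (2) bound the space: each gadget is $\BO(1)$ tiles and there are exactly $m$ of them along any assembly path plus the $\BO(1)$ cap, so every terminal assembly has size $\BO(m)$, giving $\BO(m)$ space; (3) bound tile complexity: we need one ``layer'' of gadget tiles per level to keep the $2^m$ leaves distinguishable and to route non-colliding offsets, so $|T| = \BO(m) = \BO(\log n)$ where $n = 2^m$; (4) compute the probabilities. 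For the last point, partition $\termset{\Gamma}$ into $2^m$ sets $X_0,\dots,X_{2^m-1}$ where $X_k$ consists of the (single) terminal assembly whose coin outcomes spell $k$ in binary. Since the $j$-th gadget's outcome is, by Theorem~\ref{2extpositivesingleseed}, heads with probability $1/2$ and tails with probability $1/2$ regardless of $P$ and regardless of which assembly seeded it, and since the Markov-chain transition probabilities of distinct gadgets multiply along the unique path, $\probterm{\Gamma \to X_k}{P} = (1/2)^m = 1/2^m$ for every $k$ and every concentration distribution $P$. Hence $\Gamma$ is a fair robust random $2^m$ generator.

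The main obstacle is step (1)/(3) rather than the probability computation: one must argue that the recursively placed gadgets can be laid out in the plane so that (a) they never geometrically collide, (b) growth of a later gadget cannot ``leak'' back and interfere with an earlier one (the output glue must be presented only after the earlier gadget is fully committed, exactly as the $S$-tile mechanism in Theorem~\ref{2extpositivesingleseed} already guarantees the flip does not begin prematurely), and (c) the final caps for the $2^m$ outcomes are identifiable. Geometric non-collision is the crux — a clean way to handle it is to have the two exit glues of a level-$j$ gadget emerge in two different directions and to grow each gadget inside a bounding box whose dimensions shrink or shift deterministically with depth (or simply to route each of the $2^m$ branches into its own ``lane'' using $\BO(m)$ distinct tile types), so that the whole construction fits in a region of size $\BO(m)\times\BO(2^m)$ — still $\BO(m)$ space along any single producible assembly, since only one root-to-leaf path is ever realized. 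Once the layout is fixed, independence of the flips and the product formula for path probabilities in the assembly Markov chain give the result directly; substituting $n=2^m$ (so $m = \log n$) records the stated $|T| = \BO(\log n)$ and $\BO(m)$ space bounds.
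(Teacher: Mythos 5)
Your proposal matches the paper's proof in essence: the paper also chains $m$ repetitions of the single-seed gadget from Theorem~\ref{2extpositivesingleseed}, argues equiprobability of the $2^m$ outcomes from the independence of the $m$ robust fair flips, and gets $\BO(m)$ space from $m$ gadgets of $\BO(1)$ size each. Your additional care about layout, per-level tile layers (giving $|T|=\BO(m)=\BO(\log n)$), and $2$-extensibility fills in details the paper leaves implicit, but it is the same construction and argument.
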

\begin{proof}
This corollary follows from Theorem~\ref{2extpositivesingleseed}. The single seed robust fair coin-flip system is used as a bit generator. By assembling $m$ distinct repetitions of the system, the resultant system has $2^m$ possible states. The equiprobability of the terminal states follows from each of the $m$ repetitions being independent fair coin-flips. We assemble $m$  $\BO(1)$ space coin-flips, so the total space is $\BO(m)$.
\hfill \qed
\end{proof}

\subsection{Robust Random Generator for General $n$}
We generalize this result to the problem of robust random $n$ generation for general $n$. We apply Corollary~\ref{cor:2s} to achieve two results. This first result, Theorem~\ref{thm:1nrandom}, achieves $0$ bias and $\mathcal{O}(\log n)$ expected space, but uses unbounded space in the worst case. Then, we consider a given space bound $s$ to construct a robust random $n$ generator with bias $\frac{1}{2^{\Theta(\frac{s}{\log n})}}$ guaranteed to use at most $s$ space in Theorem~\ref{thm:1nrandomBounded}.
%\begin{algorithm}
%\textbf{Input:} Integer $m \geq 2$
%Find an integer $t$ such that $2^{t-1} \leq m \leq 2^t$
%\textbf{Loop:}
%Generate $t$ random bits
%Let $r$ be the number represented by the $t$ bits
%\if{$r < m$} \state{return $r$}
%\else \state{goto Loop} \endif
%\endloop
%\end{algorithm}

\begin{theorem}\label{thm:1nrandom}
For $n\in \mathbb{Z^+}$, there exists a $\BO(\log n)$ expected space $2$-extensible fair robust random $n$ generator $\Gamma=(T,\sigma,2)$ in the aTAM with $|\sigma| = 1$ and $|T| = \BO(\log n)$.
\end{theorem}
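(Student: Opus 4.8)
The plan is to use the powers-of-two generator from Corollary~\ref{cor:2s} as a subroutine, and wrap it in the classical rejection-sampling scheme for drawing a uniform value in $\{1,\dots,n\}$ from a stream of fair coin flips. Let $m = \lceil \log_2 n \rceil$, so that one invocation of the $\BO(m)$-space fair robust random $2^m$ generator produces a uniformly random element of $\{1,\dots,2^m\}$, independent of the concentration distribution. Interpret the outcome as an integer $r$; if $r \le n$, accept and place $r$ into the appropriate partition class $X_r$; if $r > n$ (this happens with probability $(2^m - n)/2^m < 1/2$), the gadget does not terminate but instead exposes a glue that seeds a fresh, independent copy of the same $2^m$ generator, and we repeat. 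Because each round is an independent fair draw that is robust to concentrations, the conditional distribution given acceptance is exactly uniform on $\{1,\dots,n\}$ for every concentration distribution $C$, giving bias $0$.

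The key steps, in order: (1) fix $m = \lceil \log_2 n\rceil$ and build, via Corollary~\ref{cor:2s}, a block that assembles one uniform draw from $\{1,\dots,2^m\}$ in $\BO(m) = \BO(\log n)$ space, using $\BO(\log n)$ tile types and a single seed at $\tau = 2$; (2) design, for each of the $2^m$ possible terminal states of that block, a constant-size "decision" gadget — for the $n$ accepting values, terminate (recording the value); for the $2^m - n$ rejecting values, deterministically grow a short connector that plays the role of the seed tile $S$ of a fresh copy of the block, placed in a new region of the plane so the copies do not collide; (3) verify that the whole system is finite (every producible assembly has a terminating continuation: with probability $1$ some round accepts), is $2$-extensible (each sub-block is $2$-extensible by Corollary~\ref{cor:2s}, and the deterministic connectors and decision gadgets add only $1$-extensible growth, and the two are never simultaneously active), and has $|T| = \BO(\log n)$ since only one block's worth of tile types is reused across all rounds; (4) partition $\termset{\Gamma}$ into $X_1,\dots,X_n$ by accepted value and compute $\probterm{\Gamma \to X_i}{C} = \sum_{k\ge 0} \big(\tfrac{2^m - n}{2^m}\big)^k \cdot \tfrac{1}{2^m} = \tfrac{1}{n}$ for all $i$ and all $C$, using that each round is a robust fair $2^m$-draw independent of $C$; (5) bound the expected space: the number of rounds is geometric with success probability $p = n/2^m > 1/2$, so the expected number of rounds is $< 2$, each round uses $\BO(\log n)$ space, hence $\texttt{EXPECTEDSPACE}_\Gamma = \BO(\log n)$, while the worst case is unbounded since arbitrarily long rejection runs occur with positive probability.

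The main obstacle is step (2)–(3): the geometric layout. Successive copies of the $\BO(\log n)$-space block must be placed so that the $\BO(\log n)$-height blocks from different rounds never overlap in $\Z^2$ and never let a tile from a later round attach to an earlier round's assembly (which would break the independence and the $2$-extensibility bookkeeping). The clean fix is to route each deterministic connector a fixed amount farther out before germinating the next seed $S$ — e.g., stagger the rounds along a diagonal or a widening spiral, so that round $k$'s block occupies a region whose bounding box is disjoint from all previous ones; since the blocks are $\BO(\log n)$-sized and only one is "active" (has nonempty frontier from the randomized part) at a time, a linear offset per round suffices and preserves $2$-extensibility. Everything else — the probability computation, finiteness, tile-complexity count — is routine once the layout is pinned down.
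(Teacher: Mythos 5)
Your proposal is correct and follows essentially the same route as the paper's proof: invoke the fair robust random $2^m$ generator of Corollary~\ref{cor:2s} with $m$ minimal such that $n\le 2^m$, read the output, reject and seed a fresh generator when the value falls outside $[0,n-1]$, partition terminal assemblies by the result of the final round, and note that $\BO(1)$ expected repetitions of an $\BO(\log n)$-space block give $\BO(\log n)$ expected space. The only difference is that you spell out the geometric staggering of successive blocks and the $2$-extensibility/tile-reuse bookkeeping, details the paper's proof leaves implicit.
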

\begin{proof}
According to Corollary~\ref{cor:2s} we can construct a robust random $2^m$ generator. Let $m$ be the smallest integer such that $n \leq 2^m$. We implement the robust random $2^m$ generator. We enumerate the $2^m$ terminal states of the system $0, 1,\dots,2^{m}-1$. We can then use additional tiles to read the final state of the $2^m$ generator. If the result exceeds $n-1$, we begin another $2^m$ generator and repeat the process. If the result does not exceed $n-1$, then we let the system terminate. We partition the set of terminal states by the result of the final $2^m$ generator, which must have assembled a state between $[0,n-1]$, or else it would have assembled an additional system, resulting in a robust random $n$ generator. The algorithm has $\BO(1)$ expected repetitions, and each repetition constructs a $\BO(m) = \BO(\log n)$ space system, resulting in $\BO(\log n)$ expected space.
\hfill \qed
\end{proof}

\begin{theorem}\label{thm:1nrandomBounded}
Given $n,s\in\mathbb{N}$ such that $s > 25\lceil\log n\rceil$, there exists a $2$-extensible robust random $n$ generator $\Gamma=(T,\sigma,2)$ with bias $\frac{1}{2^{\Theta \left (\frac{s}{\log n} \right )}}$ in the aTAM where the space of $\Gamma \leq s$, $|\sigma| = 1$, and $|T|=\BO(s + \log n)$.
\end{theorem}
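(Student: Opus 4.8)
The plan is to truncate the rejection-sampling loop of Theorem~\ref{thm:1nrandom} after a bounded number of rounds and to pay for the truncation with a small, controlled bias. First I would fix $m=\lceil\log n\rceil$, so that $n\le 2^m<2n$ (for $n\ge 2$), and write $r=2^m-n$ for the size of the ``reject region'' $\{n,n+1,\dots,2^m-1\}$; note $0\le r<n$, and set $q=r/2^m$, so $q\le \tfrac12$. Choosing the number of rounds $k=\Theta\!\left(s/\log n\right)$ to be the largest value for which $k$ chained rounds are guaranteed to fit in space $s$ (the hypothesis $s>25\lceil\log n\rceil$ ensures $k\ge 1$), the construction chains at most $k$ copies of the $\BO(m)$-space robust random $2^m$ generator of Corollary~\ref{cor:2s}. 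Between consecutive copies I would insert a deterministic $\BO(\log n)$-tile ``read-and-branch'' gadget that reads the $m$-bit outcome $v$ of the preceding generator: if $v<n$ the assembly halts; if $v\ge n$ and the preceding generator was not the $k$-th copy, the gadget nucleates the next copy; if $v\ge n$ and it was the $k$-th copy, the assembly halts anyway.

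Next I would specify the partition of $\termset{\Gamma}$. Place a terminal assembly in $X_{j+1}$ (for $0\le j\le n-1$) if its last-executed $2^m$ generator produced outcome $v=j$, or --- possible only on round $k$, and only for $j<r$ --- outcome $v=n+j$; thus the $r$ overflow values of the final round are distributed one apiece over $X_1,\dots,X_r$, which is as balanced as possible since $r<n$. Because each copy of the generator is, by Corollary~\ref{cor:2s}, uniform on $\{0,\dots,2^m-1\}$ and independent of the others for \emph{every} concentration distribution $C$, summing the geometric series over rounds gives
\[
\sum_{x\in X_{j+1}}\probterm{\Gamma\rightarrow x}{C}=\frac{1-q^{k}}{n}+\delta_j\cdot\frac{q^{k-1}}{2^m},
\]
where $\delta_j=1$ if $j<r$ and $\delta_j=0$ otherwise; these sum to $1$ using $r/2^m=q$. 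Hence the gap between any two partition masses is at most $q^{k-1}/2^m\le 2^{1-k}=2^{-\Theta(s/\log n)}=\frac{1}{2^{\Theta(s/\log n)}}$, which is the claimed bias, and the bound holds for all $C$, so $\Gamma$ is a robust random $n$ generator with that bias. Finally, $|\sigma|=1$ and $2$-extensibility are inherited from Corollary~\ref{cor:2s} (the chaining and the read-and-branch gadgets introduce no third simultaneous growth site), the worst-case space is $k\cdot\BO(\log n)+\BO(\log n)\le s$ by the choice of $k$, and the tile set has size $\BO(k\log n)+\BO(\log n)=\BO(s+\log n)$.

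I expect the main obstacle to be the bounded-space bookkeeping rather than the probability calculation. One must verify that the read-and-branch gadget can genuinely compare an $m$-bit string against the fixed threshold $n$, emit the correct partition label (including the overflow remap $v\mapsto v-n$ on the final round), and hand off cleanly to the next generator, all within $\BO(\log n)$ space and tiles and without ever exposing more than two growth sites --- so that $k$ may indeed be taken as large as $\Theta(s/\log n)$; the explicit constants (and the role of the $25$ in the hypothesis) fall out of the exact per-round footprint. A secondary point requiring care is that the uniformity and mutual independence of the successive $2^m$ generators must hold simultaneously for every concentration distribution; this is precisely the robustness guaranteed by Corollary~\ref{cor:2s}, and it is what allows the geometric-series computation above to be carried out with a single concentration-independent symbol $q$ rather than a round-dependent quantity.
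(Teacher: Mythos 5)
Your proposal is correct and follows essentially the same route as the paper: truncate the rejection-sampling loop of Theorem~\ref{thm:1nrandom} at $k=\Theta(s/\log n)$ rounds (chained $2^m$ generators with read-and-compare gadgets fitting in space $s$), remap the final round's overflow outcomes via $v\mapsto v-n$, and bound the resulting bias by the all-rounds-fail probability, giving $2^{-\Theta(s/\log n)}$. Your explicit partition-mass formula $\frac{1-q^k}{n}+\delta_j\frac{q^{k-1}}{2^m}$ agrees with the paper's computation of $\frac{1}{n}P_s(k)+\frac{1}{2^m-n}P_f(k)$, and your acknowledged gap (tile-level details of the comparison gadget and the exact constants behind $s>25\lceil\log n\rceil$) is treated at a similar level of brevity in the paper itself.
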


\begin{proof}
We construct a similar system to that of Theorem~\ref{thm:1nrandom}. Let $m$ be the smallest integer such that $n \leq 2^m$. We implement the algorithm used in Theorem~\ref{thm:1nrandom}, except that we bound the number of repetitions of bit string generation to $k$ repetitions in order to meet the given space constraint.  The generation of each bit requires $15$ tiles as seen in Theorem~\ref{2extpositivesingleseed}. A set of tiles which grow a $5m\times1$ row on top of the generated bits are used to read the output to check if the result is $\leq n-1$. A set of $4k$ distinct tiles are used to create a $1\times4k$ column to bound the number of repetitions of the algorithm; that is, the algorithm will repeat a generation of $m$ bits if the output of the previous generation exceeds $n-1$ by introducing a cooperative binding location to continue attachment in the $1\times4k$ column.
With this method, a terminal assembly $\alpha$ of the system which uses $k$ repetitions satisfies $\lvert \alpha \rvert \leq 20mk+4k$, therefore a system implementing $k = \Theta (\frac{s}{m}) = \Theta (\frac{s}{\log n})$ satisfies $max \left \{ \left | \alpha \right | : \alpha \in TERM_\Gamma \right \} \leq s$.

%\begin{equation}
%f(m,s) = \begin{cases}
 %\begin{displaymath}
  % f(m,s) = \left\{
%   %  \begin{array}{lr}
%      1 & : s < 60m + 10\\
%      2 & : 60m + 10 \leq s < 90m + 15\\
%      3 & : 90m+15 \leq s < 120m + 20\\
%       ...\\
%       i & : 30im + 5i \leq s < 30(i+1)m + 5(i+1)
%\end{cases}\label{eq:rep}
%\end{equation}
 %   \end{array}\right.
%\end{displaymath}

Let $P_s(k)$ be the probability that the result is in the range $[0,n-1]$ in at most $k$ successive repetitions and $P_f(k) = 1 - P_s(k)$, which is the probability that $k$ successive repetitions generate a number in $[n,2^m-1]$.  The probability that the result of a single repetition is in $[0,n-1]$ (i.e., the system succeeds in generating a result in $[0,n-1]$ uniformly) is at least  $\frac{1}{2}$ since $2^m - n \leq n$. Therefore, the probability of failure after $k$ rounds is $P_f(k) \leq \frac{1}{2^k}$ and $P_s(k) \geq 1- \frac{1}{2^k}$. 
We partition the set of terminal states by the result of the final round.
If the result $r$ of the final repetition exceeds $n-1$, then we map the result by the function $f(r) = r-n$. That is, if the result $r$ of number generation in the final repetition is outside the range $[0,n-1]$, the system is considered to generate $r-n$. This maps $[n,2^m-1]$ to $[0, 2^m-n-1]$.
Therefore, in the final repetition, results in the range $[0,2^m-n-1]$ have a higher probability than results in the range $[2^m-n, n-1]$, resulting in bias of the generator.
The probability of generating a single number in $[2^m-n, n-1]$ given $k$ repetitions is $\frac{1}{n}P_s(k)$.
The probability of generating a single number in $[0, 2^m-n-1]$ is
$\frac{1}{n}P_s(k) + \frac{1}{2^m-n}P_f(k)$.
Then the bias of the generator is
\begin{align*}
\left |  \left ( \frac{1}{n}P_s(k) + \frac{1}{2^m-n}P_f(k) \right ) - \frac{1}{n}P_s(k) \right | &= \left |   \frac{1}{2^m-n}P_f(k) \right | \\
&\leq \left |  P_f(k) \right | \\
&\leq \frac{1}{2^k},
\end{align*}
therefore this robust random $n$ generator implementing $k = \Theta(\frac{s}{\log n})$ repetitions has bias $\frac{1}{2^{\Theta \left ( \frac{s}{\log n} \right )}}$.
\hfill \qed
\end{proof}

\section{1-Extensible Robust Fair Coin Flipping in the aTAM}\label{sec:1ext}
Thus far we have covered solutions to truly random coin flips that are robust to arbitrary concentrations.  We then used these gadgets to build expected length $n$ constant width lines at temperature two and one.  We also looked at approaching robust random number generation in the aTAM, and gave some methods for this that work in $\BO(\log n)$ expected space.

All of the positive results have required 2-extensibility, meaning in the assembly process there is at least one time when there are two possible locations tiles could be attach to. Following, we begin to look at some of the limitations of the aTAM with respect to randomness.  We then give some positive results despite the inherent limitations.  First we look at 1-extensible systems, and then at unstable concentrations (where the concentrations of tiles may change during assembly).

\subsection{$\BO(1)$ Space 1-extensible Robust Fair Coin Flipping}
A natural question from the $2$-extensible solution to the robust fair coin flip problem is whether there is also a $1$-extensible solution. We first answer this in the negative with Theorem~\ref{1-ext negative}, saying there is no $\mathcal{O}(1)$ space solution in the aTAM. However, using algorithms based on John von Neumann's randomness extractor \cite{von195113} we can achieve an unbounded space robust fair coin flip system (Theorem~\ref{1-ext positive}) as well as an $s$ space construction which incurs a small bias (Theorem~\ref{1-ext approximate positive}), for some space constraint $s$.  These are covered in Sections \ref{subsec:unb1ext} and \ref{subsec:fixed1ext}, respectively.

\begin{theorem}\label{1-ext negative}
There does not exist a $\BO(1)$ space 1-extensible robust fair coin flip tile system in the aTAM.
\end{theorem}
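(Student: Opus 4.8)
The plan is to argue by contradiction: suppose $\Gamma = (T,\sigma,1)$ is a $1$-extensible robust fair coin flip system using only $\BO(1)$ space, say every terminal assembly has size at most $K$ for a fixed constant $K$. Since $\Gamma$ is $1$-extensible, every producible assembly has at most one frontier location, so the Markov chain on $\prodset{\Gamma}$ is essentially a tree in which each non-terminal assembly $\alpha$ has children obtained by placing one of the (at most $|T|$) tile types that can bind at $\alpha$'s unique frontier location. Because total space is bounded by $K$, this tree has depth at most $K$, hence the set $\prodset{\Gamma}$ is finite, with size bounded by a function of $K$ and $|T|$ only — in particular, by a constant independent of the concentration distribution. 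First I would set up this finite-tree picture carefully and identify the finitely many transition probabilities $TRANS(\alpha,\beta)$ as rational functions of the concentrations $c_1,\dots,c_{|T|}$; each is of the form $c_i / \sum_{j\in S(\alpha)} c_j$ where $S(\alpha)$ is the set of tile types attachable at $\alpha$'s frontier.

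Next I would express the bias of the coin flip. For any partition $(X,Y)$ of $\termset{\Gamma}$, the quantity $D(c) := \sum_{x\in X}\probterm{\Gamma\to x}{c} - \sum_{y\in Y}\probterm{\Gamma\to y}{c}$ is a finite sum of products of these rational functions, hence itself a rational function of $c = (c_1,\dots,c_{|T|})$ on the open simplex. Robustness requires $D(c) \equiv 0$ on the simplex, so I would work to derive a contradiction from this identity. The idea is to push the concentration of some tile type to an extreme — say send $c_i \to 0$ (or equivalently send $c_j \to 0$ for all $j \neq i$, i.e. approach a vertex of the simplex) — and track what happens to the assembly process in the limit. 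In such a limit the chain becomes (nearly) deterministic: at any frontier whose attachable set $S(\alpha)$ contains a "large-concentration" tile, that tile is placed with probability approaching $1$, so the whole process collapses onto essentially one assembly path (or a small set of paths, one for each genuinely tied branch point). The key combinatorial fact to extract is that in a $1$-extensible bounded-space system there are only finitely many branch points along any path, and by choosing the concentration extremes adversarially one can force the outcome into one partition class with probability approaching $1$, contradicting $D(c)\equiv 0$ — unless the branching structure is so symmetric that every path is "balanced," which a size/parity argument should rule out given that each branch has only finitely many children and the tree has bounded depth.

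The main obstacle I expect is making the limiting/degeneracy argument fully rigorous rather than hand-wavy: one must show that the symmetry required to keep $D \equiv 0$ under all concentration limits is genuinely unachievable for a $1$-extensible system, and this is where $1$-extensibility (as opposed to $2$-extensibility, which the earlier positive constructions exploit) is essential — with only one growth site you cannot "race" two tiles against each other in a way that cancels their concentration ratio, the trick used in Theorem~\ref{2extpositive}. Concretely, I would likely analyze the coarsest branch point (the first assembly $\alpha$ reachable with $|S(\alpha)| \ge 2$ with positive probability), condition on the sub-chains below each child, and use an inductive/extremal argument on the number of tile types or the depth: either the branch can be biased by a concentration choice (done), or all children lead to sub-systems that are themselves robust fair (or robustly biased) coin flips on fewer effective tile types, eventually bottoming out at a trivial system where fairness visibly fails. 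A cleaner alternative, which I would also consider, is a direct polynomial-identity argument: clear denominators in $D(c)\equiv 0$ to get a polynomial identity, then evaluate at a well-chosen sequence of concentration vectors (e.g. $c_j = \epsilon^{j}$ as $\epsilon\to 0$, exactly the trick used in the proof of Theorem~\ref{thm:expected_length_neg}) to isolate a single dominant monomial and force a contradiction with the required cancellation. The $\epsilon^j$ scaling is attractive because it converts "robustness" into infinitely many independent constraints that a constant-size system cannot satisfy.
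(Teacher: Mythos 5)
Your approach is essentially the paper's: the published proof fixes geometrically separated concentrations ($c_{i+1}=10mn^3\,c_i$), so that at the unique frontier location the highest-concentration attachable tile type wins each of the at most constantly many attachment steps with probability close to $1$, forcing one particular terminal assembly to occur with probability greater than $1/2$ and contradicting fairness --- exactly the $c_j=\epsilon^j$ degeneration you describe as your ``cleaner alternative.'' The obstacle you flag (symmetric or tied branch points requiring a parity/induction argument) never actually arises, because the adversary may choose all tile concentrations pairwise distinct and well separated, so every branch point has a unique dominant choice and a union bound over the bounded number of steps already completes the contradiction.
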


\begin{proof}
We prove this by contradiction. Assume that there exists a $\BO(1)$
space 1-extensible robust fair coin flip aTAM tile system
$\Gamma=(T,\sigma,\tau)$. We now specify a concentration distribution for $m$ tiles in $T$ that contradicts this claim.  Assume that $\Gamma$
generates assemblies of size at most $h$. Consider a series of
assemblies $p_1,\dots, p_n$ such that $p_{i+1}$ is derived from $p_{i}$
by the attachment of the tile in the frontier of $p_{i}$ with the
largest concentration. Select a parameter $t=10mn^3$, and let $c_1=1$ and $c_{i+1}=tc_i$ for $i=1,\dots, m-1$.  Let the concentration for each $t_i \in T$ be ${c_i\over c_1+c_2+\cdots+c_m}$.

%We prove this by contradiction. Assume that there exists a $\BO(1)$
%space 1-extensible robust fair coin flip aTAM tile system
%$\Gamma=(T,\sigma,\tau)$. We assign a positive weight $c_i$ for
%each type of tile $t_i \in T$. The concentration of each $t_i$ is defined to be
%${c_i\over c_1+c_2+\cdots+c_m}$.
%Let $c_1,c_2,\dots, c_m$ be the concentrations of the $m$ tile types in $T$.
%Assume that $\Gamma$
%generates assemblies of size at most $h$. Consider a series of
%phases $p_1,\dots, p_n$ such that $p_{i+1}$ is derived from $p_{i}$
%by the attachment of the tile in the frontier of $p_{i}$ with the
%largest concentration. Select a parameter $t=10mn^3$ such that
%$c_1=1$ and $c_{i+1}=tc_i$ for $i=1,\dots, m-1$.

%We prove by contradiction. Assume that there exists a bounded space 1-extensible coin-flip tile system in the aTAM.
%Let $c_1,c_2,\cdots, c_m$ be the concentration of the $m$
%tiles involved in the construction. Assume that it generates
%assemblies of size at most $h$.  Consider a series of phases
%$p_1,\cdots, p_n$ such that $p_{i+1}$ is derived from $p_{i}$ that
%the tile, which is attachable, with the largest concentration is
%appended. Select a parameter $t=10mn^3$ such that $c_1=1$ and $c_{i+1}=tc_i$ for $i=1,\cdots, m-1$. At each phase $i$, let $q_i$ be
%the position of the attachable tile, and let $r_i$ be the attachable tile
%with the largest concentration among all attachable tiles.

For each assembly $p_i$, let $q_{i_1},\dots, q_{i_u}$ be the set of tile types in the frontier of ${p_i}$ listed in increasing order by their concentrations. Let $c_{i_u}$ denote the concentration of tile type $q_{i_u}$. With
probability ${c_{i_u}\over c_{i_1}+\cdots +c_{i_u}}$, tile type $q_{i_u}$
is attached. We have

\begin{align*}
{c_{i_u}\over c_{i_1}+\cdots +c_{i_u}} &\ge {1\over{(u-1)c_{i_{u-1}}\over c_{i_u}}+1}\\
&\ge {\frac{1}{{(u-1)\over t}+1}} \\
&\ge {\frac{1}{{m\over t}+1}}\\
&\ge {1\over {1\over 10n^3}+1}.
\end{align*}

Therefore, with probability at least

\begin{eqnarray*}
\left ({1\over {1\over 10n^3}+1} \right )^n&\ge& \left ({1\over {1\over
10n^3}+1}\right)^{10n^3\cdot {1\over 10n^2}}\\
&\ge&\left ({1\over e} \right )^{1\over 10n^2}\\
&>&0.6
\end{eqnarray*}

we follow
the sequence $p_1,\dots, p_n$ to generate an assembly. This is a
contradiction. Note that we use the facts that $(1+{1\over x})^x$ is
an increasing function for all real $x>1$, and $\lim_{x\rightarrow
+\infty}(1+{1\over x})^x=e\approx 2.17828$.
\hfill \qed
\end{proof}

% -*- root: ../fair_coin_flipping.tex -*-

\subsection{Unbounded Space, 1-Extensible, Robust Coin Flipping} \label{subsec:unb1ext}
In response to Theorem~\ref{1-ext negative}, which says there does not exist a $\BO(1)$ space 1-extensible robust fair coin flip tile system in the aTAM, we give a $1$-extensible aTAM system capable of robust fair coin flips in unbounded space. In 1951, John von Neumann gave a simple method for extracting a fair coin from a biased one \cite{von195113}. We show an algorithm based on the Von Neumann extractor. Algorithm \ref{alg:1-ext-unbounded-algorithm} uses an unbounded number of \emph{rounds} to extract a fair coin flip. We use Algorithm \ref{alg:1-ext-unbounded-algorithm} to show that a \emph{fair coin flip extractor} can be implemented in the aTAM to achieve an unbounded space, 1-extensible, robust coin flip tile system. Let Algorithm 2 denote an extension of this method in which we create a \emph{bounded fair coin flip extractor} by adding a parameter $k$ which controls the maximum number of rounds allowed. If after all $k$ rounds have been exhausted the system has not returned a fair coin flip, the result of a single flip is returned. We implement this \emph{bounded coin flip extractor} in the aTAM and achieve an $s$ space, 1-extensible, and robust coin flip tile system with bounded bias, for some space constraint $s$.

\begin{figure}[t]
\begin{minipage}{.5\textwidth}
\captionof{algorithm}{Unbounded}\label{alg:1-ext-unbounded-algorithm}
\begin{algorithmic}[1]
    \Require $h,t \in \mathbb{R}$, where $0<h<1, t=1-h$
    \Ensure \emph{heads} or \emph{tails}
    \Procedure{UnboundedFCFE}{$h,t$}
        \State $coin = \{heads, tails\}$
        \State $pdist = \{h, t\}$
        \Repeat
        \State $flip\_1 \gets flip(coin, pdist)$
        \State $flip\_2 \gets flip(coin, pdist)$
        \Until{$flip\_1$ $\neq flip\_2$}
        \State \textbf{return} $flip\_2$
    \EndProcedure
\end{algorithmic}
\end{minipage}%
\begin{minipage}[H]{.5\textwidth}
  \centering
  \centerline{\includegraphics[scale=.27]{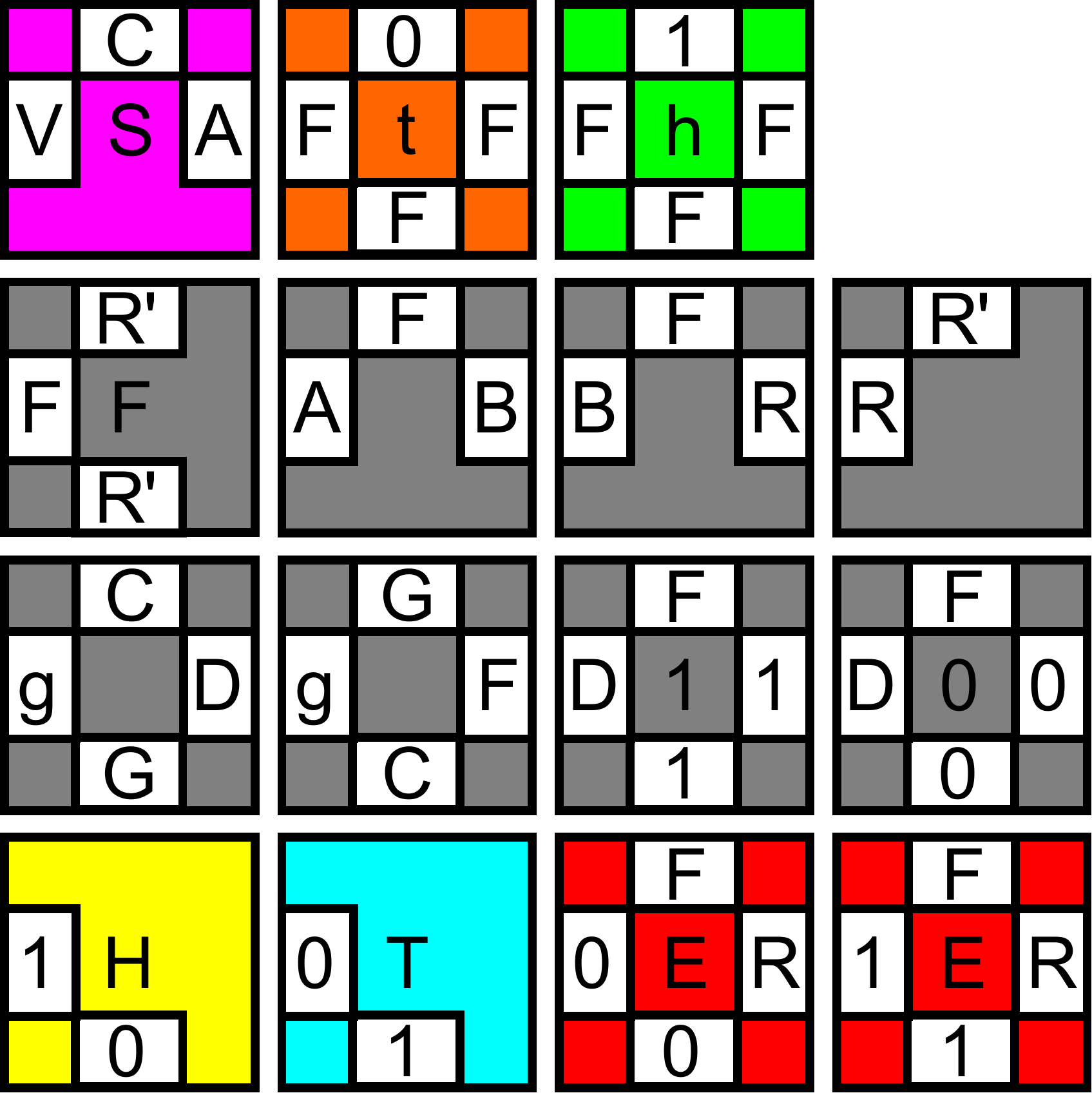}}
\end{minipage}

  \caption{Here we have the algorithm for an unbounded fair coin flip extractor on the left and the tile set for the construction that implements that algorithm on the right. The relative concentrations of the \emph{h tile} and \emph{t tile} serve as parameters $h$ and $t$, respectively. The tile labeled S is the seed of the tile assembly system and the temperature is 2. The strength of the glues are as follows: str(0)=1, str(1)=1, str(A)=2, str(B)=2, str(C)=1, str(D)=1, str(F)=1, str(G)=2, str(R)=2, and str(R')=2.}
  \label{fig:1-ext-pos-construction}
\end{figure}

%\subsection{Unbounded Space, 1-Extensible, Robust Coin Flipping} \label{subsec:unb1ext}
We now describe our 1-extensible aTAM tile system that implements Algorithm \ref{alg:1-ext-unbounded-algorithm}. In Algorithm \ref{alg:1-ext-unbounded-algorithm}, a coin is a set of cardinality 2 with possible values $heads$ and $tails$. \emph{flip} is a function that selects and returns a \emph{heads} or \emph{tails} value based on the probabilities $h$ and $t$, respectively, where $h,t \in (0,1)$ and $h + t = 1$. In our construction, calls to the \emph{flip} function are carried out by a nondeterministic competition for attachment between a \emph{h tile} and a \emph{t tile}. Aside from calls to the \emph{flip} function, the rest of the algorithm can be implemented by deterministic tile placements. Figure \ref{fig:1-ext-pos-construction} gives the tile set used in the construction. This construction yields Theorem~\ref{1-ext positive}, and an example is shown in Figure \ref{fig:1-ext-pos-choices}.

\begin{figure}[ht]
    \centering
    \begin{subfigure}[b]{0.27\textwidth}
        \centering
        \centerline{\includegraphics[scale=.18]{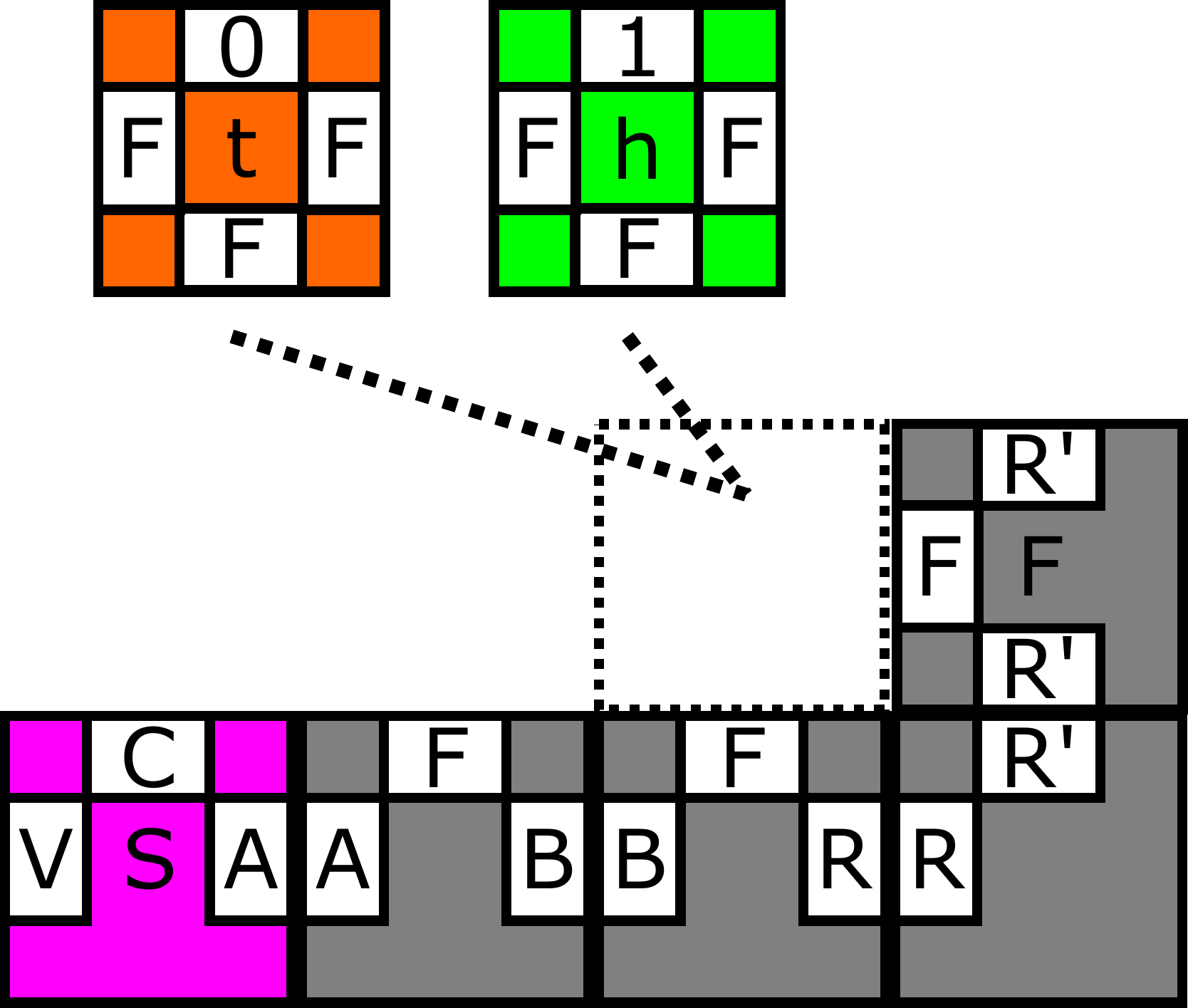}}
        \caption{An assembly with two possible choices for the next attachment corresponding to the first flip in the algorithm.}
    \end{subfigure}
    \hspace{.7cm}
    \begin{subfigure}[b]{0.27\textwidth}
        \centering
        \centerline{\includegraphics[scale=.18]{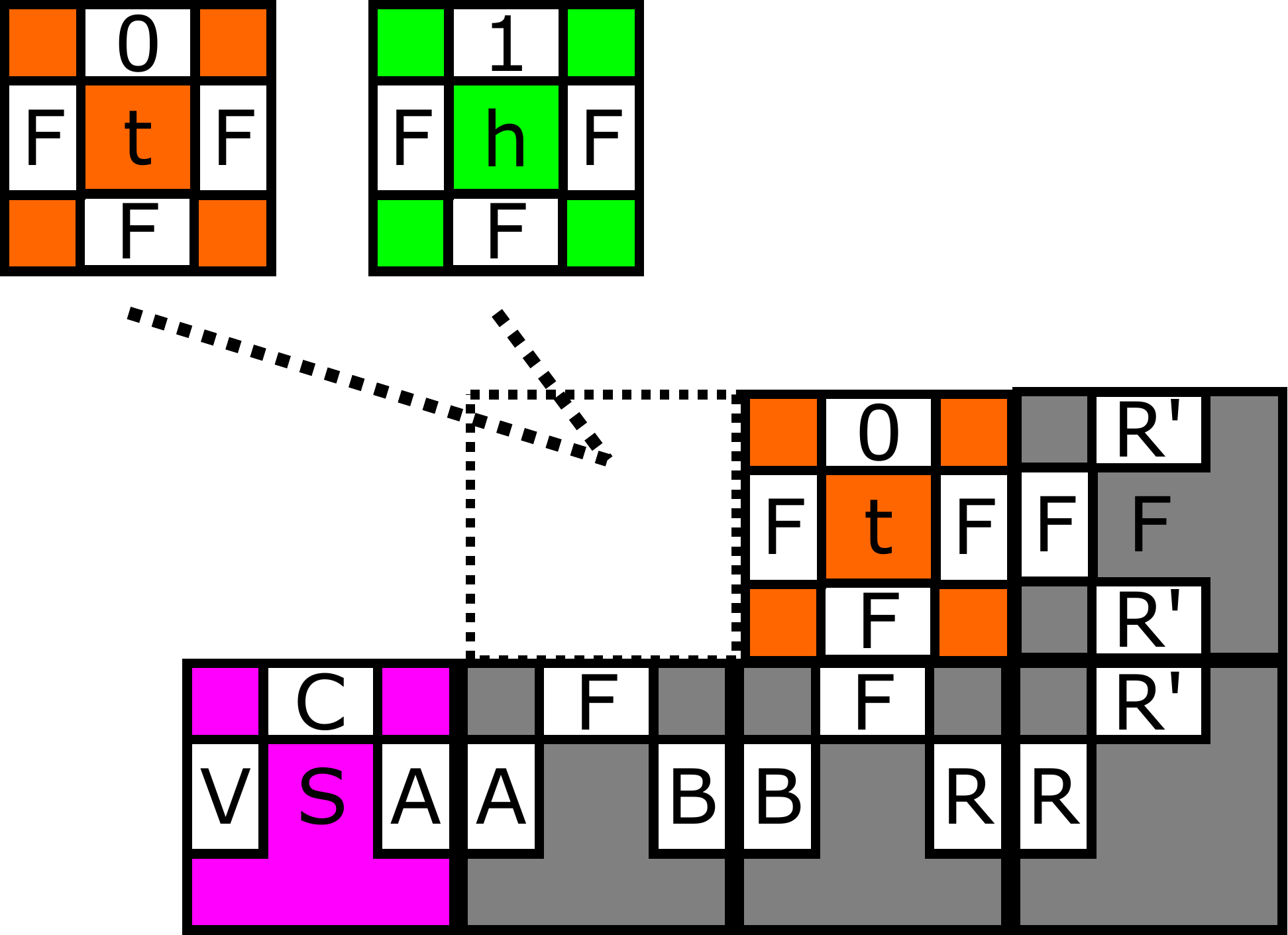}}
        \caption{Without loss of generality, this shows possible choices for the second flip of the algorithm after the first has been chosen.}
    \end{subfigure}
    \hspace{.7cm}
    \begin{subfigure}[b]{0.27\textwidth}
        \centering
        \centerline{\includegraphics[scale=.18]{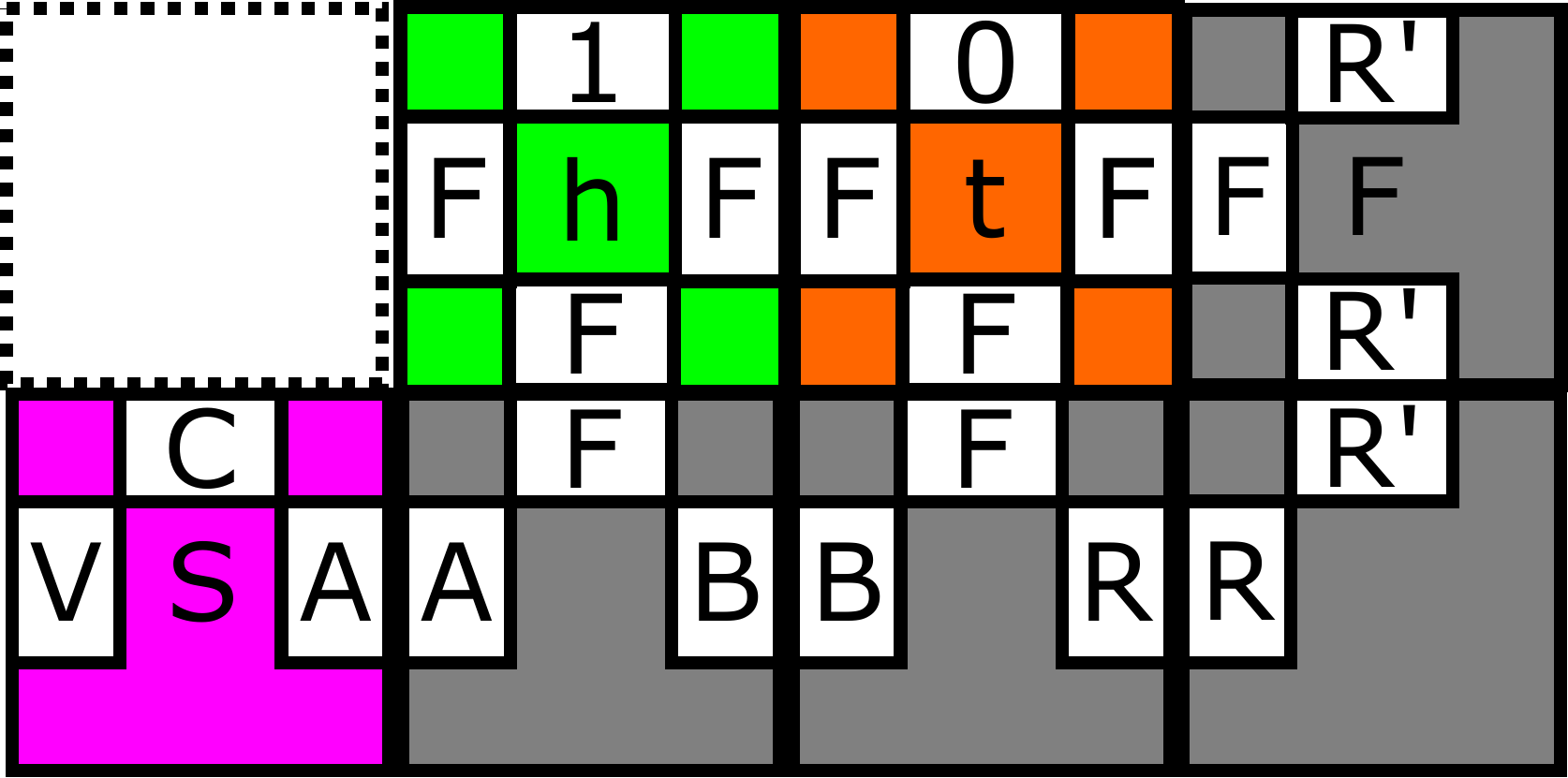}}
        \caption{A \emph{t tile} and a \emph{h tile} have been placed for the first and second flip, respectively. From Algorithm \ref{alg:1-ext-unbounded-algorithm}, this will return a \emph{heads}.}
    \end{subfigure}
    \begin{subfigure}[b]{0.27\textwidth}
        \centering
        \centerline{\includegraphics[scale=.18]{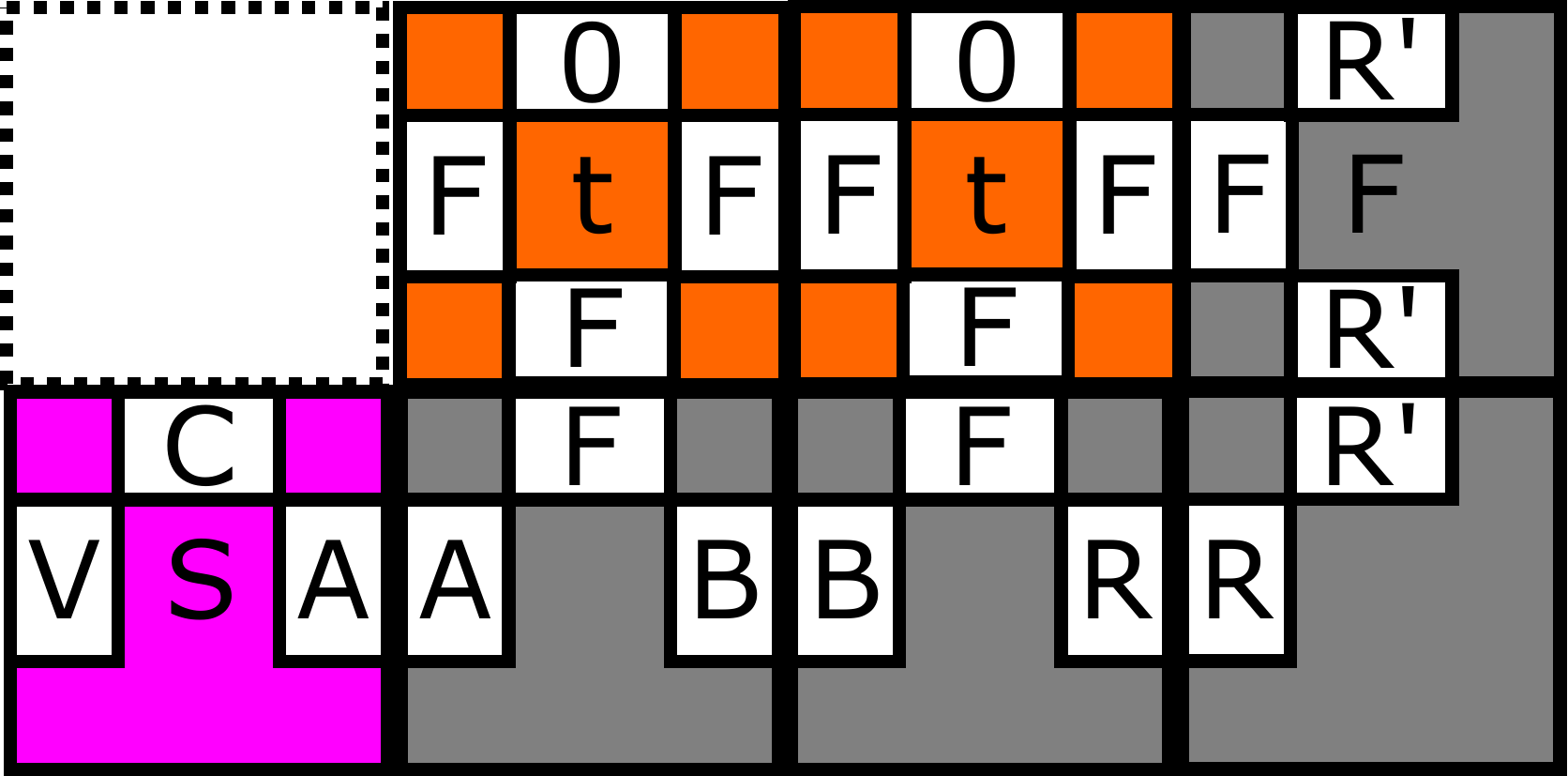}}
        \caption{Two \emph{t tiles} were placed for the first two flips. From Algorithm \ref{alg:1-ext-unbounded-algorithm}, the system must perform another round.}
    \end{subfigure}
    \hspace{.7cm}
    \begin{subfigure}[b]{0.27\textwidth}
        \centering
        \centerline{\includegraphics[scale=.18]{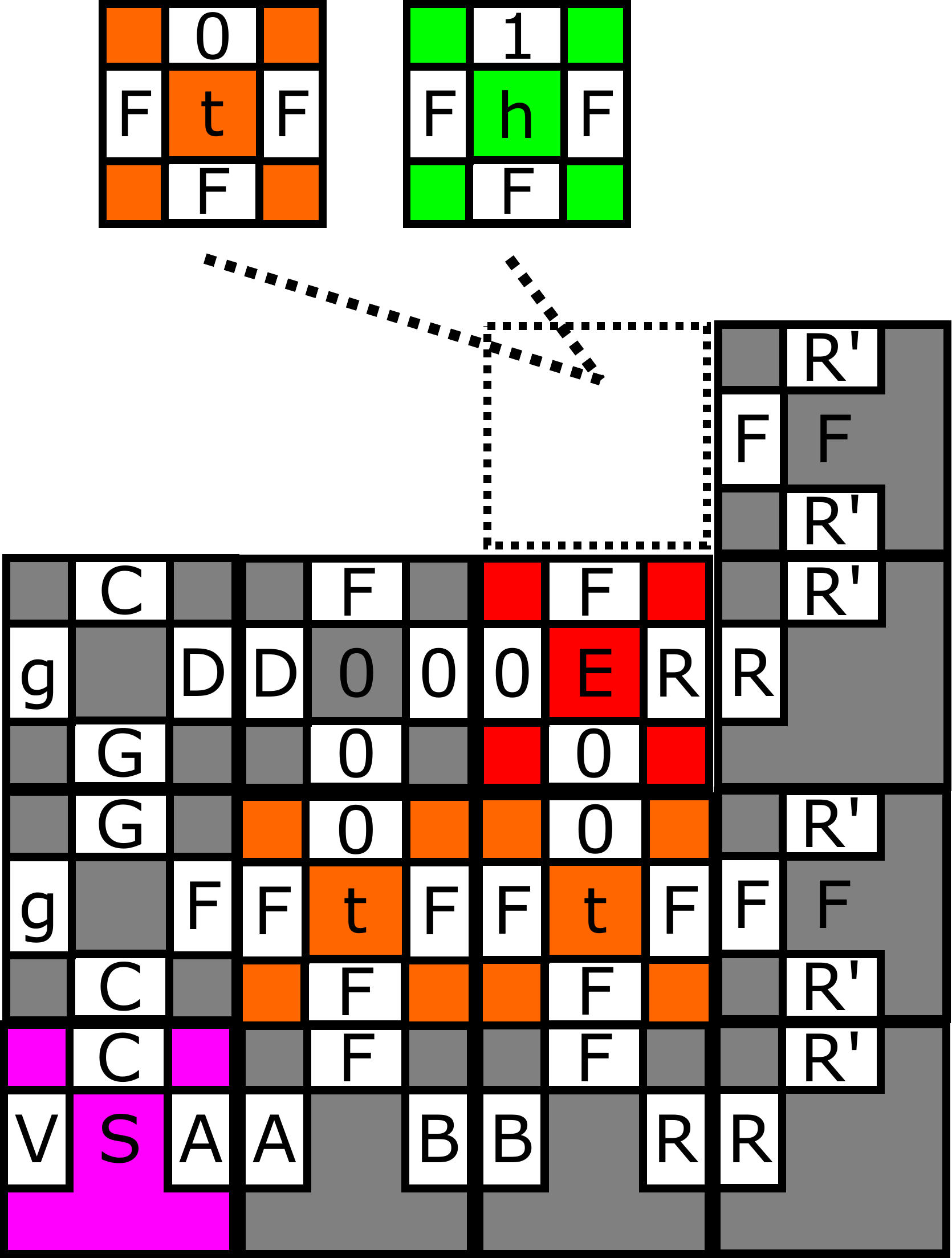}}
        \caption{An assembly where the first round of the algorithm failed to generate a bit and proceeds to start a new round.}
    \end{subfigure}
    \hspace{.7cm}
    \begin{subfigure}[b]{0.27\textwidth}
        \centering
        \centerline{\includegraphics[scale=.18]{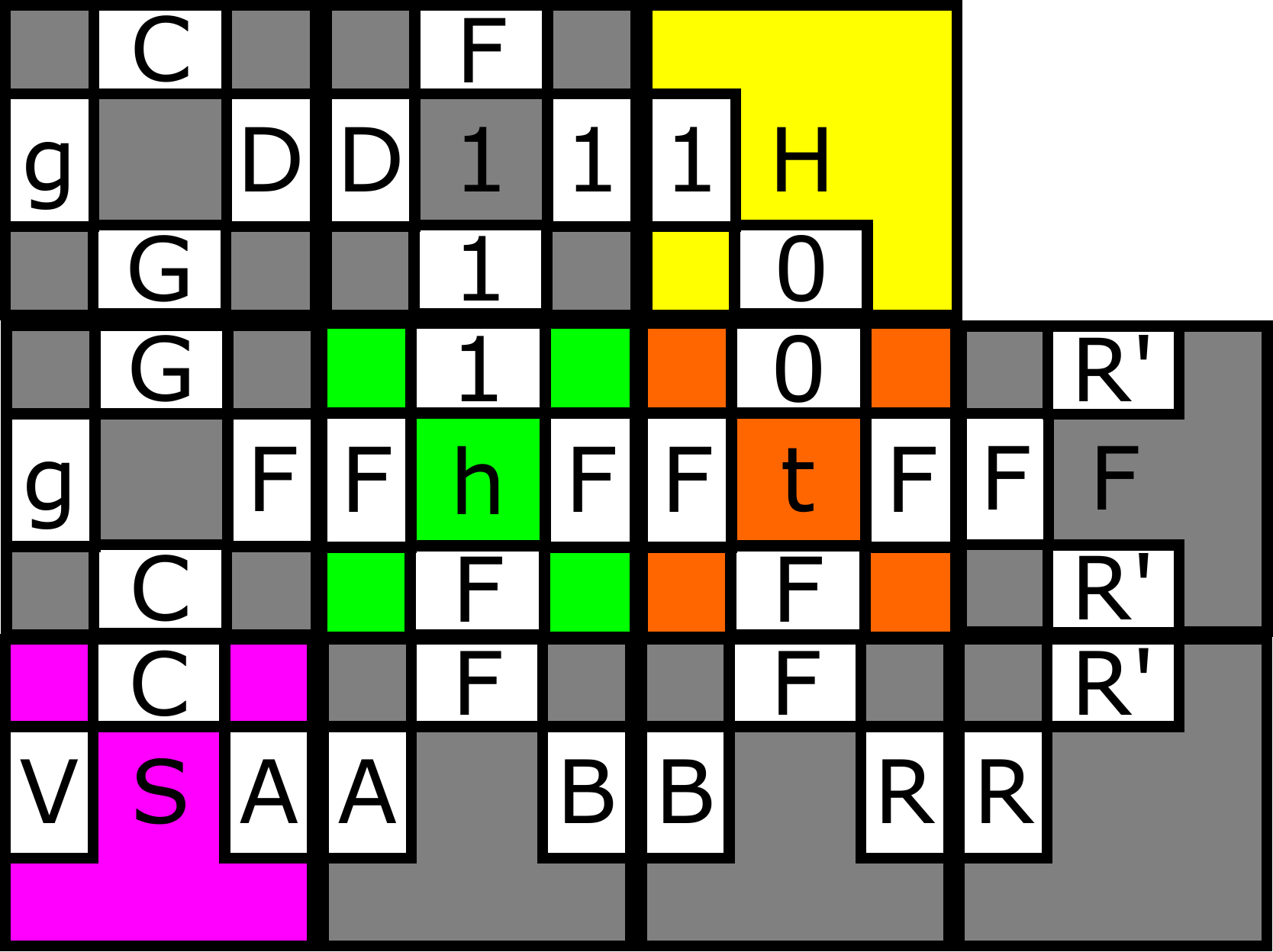}}
        \caption{An assembly where the first round of the algorithm was a valid flip and it generates a \emph{heads}.}
    \end{subfigure}
    \caption{A sample of producible assemblies for Round 1}
    \label{fig:1-ext-pos-choices}
\end{figure}

\begin{theorem}\label{1-ext positive}
There exists a 1-extensible tile system $\Gamma= (T,\sigma, 2)$ in the aTAM that implements a robust fair coin flip tile system (unbounded) and achieves $\mathcal{O} \left( \frac{1}{pq} \right)$ expected space, where $p$ and $q$ denote the relative concentrations of the two tiles with the largest difference in concentration for a given concentration distribution.
\end{theorem}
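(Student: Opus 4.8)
The plan is to exhibit a temperature-$2$ tile set $T$ that implements Algorithm~\ref{alg:1-ext-unbounded-algorithm} (the von Neumann extractor) via the gadget sketched in Figure~\ref{fig:1-ext-pos-construction}, and then establish three things: that the resulting system $\Gamma=(T,\sigma,2)$ is $1$-extensible, that its terminal assemblies partition into a ``heads'' set and a ``tails'' set each of probability exactly $\frac12$ for every concentration distribution, and that its expected space is $\BO(1/pq)$. The construction realizes each \emph{round} of the algorithm as a constant-size gadget: growth proceeds along a single deterministic path until it reaches the location for $flip\_1$, where exactly two tile types---the $h$ tile and the $t$ tile---compete for one and the same position; the glue ($0$ or $1$) deposited by the winner deterministically routes growth to the single site of $flip\_2$, where again $h$ and $t$ compete at one location; cooperative (strength-$2$) attachments then ``read'' the two outcomes, and the gadget deterministically finishes either by exposing one glue that starts a fresh round in a shifted region of the plane (if the flips agreed) or by terminating with a tile labeled according to $flip\_2$ (if they disagreed).

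For $1$-extensibility I would observe that at every intermediate producible assembly there is exactly one location at which a tile may attach; the nondeterminism lives entirely in \emph{which} tile type ($h$ or $t$) attaches at that single location, which is allowed since $k$-extensibility bounds the number of attachment locations, not the number of competing tile types. The main obstacle is geometric: the deterministic ``plumbing'' of a round---the routing between the two flips, the comparison logic, and the hand-off to the next round---must be laid out so that (a) no intermediate assembly ever has two active frontier locations, and (b) consecutive rounds occupy pairwise-disjoint regions, so the growing path never collides with itself. I would draw the gadget as a constant-height ``snake'' that advances by a fixed offset per round, so after $k$ rounds the assembly has size $\Theta(k)$ with no overlap, and then check (routinely) that no unintended off-path or sub-$\tau$ attachment is ever possible.

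For robust fairness, let $c_h,c_t$ be the concentrations assigned to the $h$ and $t$ tiles and set $p=c_h/(c_h+c_t)$, $q=c_t/(c_h+c_t)$. At each flip location only $h$ and $t$ can attach, so by Equation~\ref{def:trans} a flip is $heads$ with probability $p$ and $tails$ with probability $q$, and the two flips of a round are independent. A round ``succeeds'' (produces HT or TH) with probability $2pq$, and conditioned on success $\Pr[\mathrm{HT}]=pq=\Pr[\mathrm{TH}]$, so each is $\frac12$. Since the round that finally succeeds determines the returned value, the total probability of the ``heads'' terminal assemblies is $\sum_{k\ge1}(1-2pq)^{k-1}pq=\frac12$, and likewise for ``tails'', for every concentration distribution; hence $\Gamma$ is a robust fair coin flip system. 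Note every producible assembly extends to a terminal one (eventually a round disagrees), so $\Gamma$ is finite, while arbitrarily many rounds occur with positive probability, so its space is unbounded.

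For the expected-space bound, the number of rounds is geometric with success probability $2pq$, hence has expectation $\frac{1}{2pq}$; each round gadget uses $\BO(1)$ space and the rounds are disjoint, so the expected total space is $\BO\!\left(\frac{1}{pq}\right)$. Finally, since the $\{h,t\}$ pair is the only nondeterministic competition, $p$ and $q$ are exactly its relative concentrations, and an adversary seeking to maximize expected space drives $|p-q|$ toward the largest disparity realizable in $T$, which is the regime described in the statement. I expect the geometric layout ensuring simultaneous $1$-extensibility and non-self-intersection of the unbounded snake to be the crux; the probability computation and the accounting of $\BO(1)$ tile types are then straightforward.
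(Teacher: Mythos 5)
Your proposal is correct and follows essentially the same route as the paper: implement the von Neumann extractor with a single-location $h$/$t$ competition per flip (so nondeterminism lies only in which tile type attaches at the unique frontier location), obtain fairness from the conditional symmetry of HT vs.\ TH summed over a geometric number of rounds, and bound expected space by the expected number of rounds $\frac{1}{2uv}\le\frac{1}{2pq}$ times the constant gadget size. The paper likewise leaves the geometric layout and $1$-extensibility to the explicit tile set and example figures rather than a formal argument, so your flagged ``crux'' is handled there only at the same level of detail you sketch.
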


\begin{proof}
Let the probability that flipping a single coin is heads be represented by $u$ and tails be represented by $1-u$. In our construction, we have a \emph{h tile} and a \emph{t tile} with concentrations $C_{h}$ and $C_{t}$, respectively. Let $u=\frac{C_{h}}{C_{h}+C_{t}}$ and $v=1-u$. In each round, we flip two times. Let the probability of generating a bit each round be $g=2uv$. Then, let the expected total number of flips be $t$. If we succeed in the first round, we have only flipped twice. Otherwise, we have to start over, so the expected remaining number of flips is still two. Therefore, $t = 2g + (1-g)(2+t) = \frac{2}{g}$.

Using this strategy, each round requires two flips. Heads and tails each have a probability $uv$ of being generated. Thus, each round can succeed with a probability $2uv$ and the average number of flips required to generate a bit is $\frac{2}{g}=\frac{2}{2uv}=\frac{1}{uv}$. Since each round utilizes two flips, the expected number of rounds is then $\frac{t}{2}=\frac{1}{2uv}.$ In the best case, $u=v$ and the expected number of rounds would be $\frac{1}{2uv}=2$. In the worst case, the two tiles with the largest difference in concentration are the \emph{h tile} and \emph{t tile} implying $\frac{1}{2uv}=\frac{1}{2pq}$. Each round places a constant number of tiles $z$, therefore the expected space of generating a coin flip is the expected number of rounds multiplied by the number of tiles per round, $\frac{z}{2pq} = \mathcal{O} \left( \frac{1}{pq} \right)$. The placement of an \emph{H tile} or \emph{T tile} maps to the event that the algorithm returns \emph{heads} or \emph{tails}, respectively.
\hfill\qed
\end{proof}

\subsection{Fixed Space, 1-Extensible, Robust Coin Flipping} \label{subsec:fixed1ext}
We can also now limit the number of rounds, $k$, so that space of the system does not exceed some constraint $s$ by using some additional tile types and modification to glue strengths. If after $k$ rounds, the system has not returned a fair coin flip, the system returns the result of a single additional flip of the two tiles used in nondeterministic attachment. This \emph{bounded fair coin flip extractor} can be implemented in the aTAM to achieve a fixed space, 1-extensible, robust coin flip tile system with bounded bias.
The bounded $k$-rounds can be controlled by first constructing a column of height $\BO(k)$ with glues that allow the variant of the construction of Theorem \ref{1-ext positive} to grow along the right edge of the column. Note that this column can be built more efficiently, by allowing some width, using a 1-extensible version of the aTAM counter construction from \cite{AGKS05g} for a desired base, leading to a tradeoff in bias, space, and tile complexity.

\begin{theorem}\label{1-ext approximate positive}
There exists an $s$ space 1-extensible robust coin flip tile system in the aTAM with bias $p^{(s/10)}$, where $p$ is the larger relative concentration from the pair of tiles with the largest difference in concentration for a given concentration distribution.
\end{theorem}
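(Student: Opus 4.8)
The plan is to realize the bounded von~Neumann fair-coin extractor (Algorithm~2) as a strictly $1$-extensible aTAM system whose number of rounds $k$ is capped so that every terminal assembly has at most $s$ tiles, and then to charge the entire bias of the system to the single fall-back flip that is performed only when all $k$ rounds fail. I would start from the unbounded $1$-extensible gadget of Theorem~\ref{1-ext positive} (Figure~\ref{fig:1-ext-pos-construction}): one round is a constant-size region in which each call to \emph{flip} is a nondeterministic competition between the \emph{h tile} and the \emph{t tile} for a single exposed frontier location, after which deterministic tiles compare the two flips of the round and either expose an $h$ (resp.\ $t$) glue and halt, or route growth into the next round. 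To bound the rounds I would attach a scaffold column of height $\Theta(k)$, grown one tile at a time so that $|L_\alpha|\le 1$ holds while it is built, whose cells carry strength-$1$ glues that stay inert until a failed round cooperates with them; a failed round thus consumes one cell to enable the round one level up, and on reaching the column's top marker the growth instead fires a constant-size fall-back gadget performing exactly one more $h$-vs-$t$ competition and halting. The column costs $\Theta(k)$ tile types; as noted before the theorem, a base-$b$ counter of~\cite{AGKS05g} shrinks this to $\mathcal{O}(\log k)$ at the price of some width.

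For the analysis, fix an arbitrary concentration distribution and let $u=C_h/(C_h+C_t)$ and $v=1-u$ be the probabilities that one flip is heads or tails. By von~Neumann's argument, conditioned on a round succeeding (its two flips differ) the output is heads with probability exactly $1/2$, independently of $u$; so the only bias comes from the event that all $k$ rounds fail, of probability $P_f=(1-2uv)^k=(u^2+v^2)^k$, after which one biased flip is output. Partitioning the terminal assemblies into the heads-set $X$ and the tails-set $Y$ by the emitted glue, a direct computation gives $\Pr[X]-\Pr[Y]=P_f(u-v)$, so the bias is $\tfrac12|\Pr[X]-\Pr[Y]|=\tfrac12 P_f|u-v|\le P_f$. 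Following the worst-case identification used in the proof of Theorem~\ref{1-ext positive} (where the \emph{h}/\emph{t} pair is taken to be the pair of largest concentration difference, so $\max(u,v)=p$), we have $u^2+v^2\le(u+v)\max(u,v)=p$, hence $P_f\le p^k$; since this holds for every concentration distribution, the system is robust.

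Finally, choosing $k$: each round uses a constant number of tiles and spans a constant height of the column, so a $k$-round terminal assembly has size at most $ck$ for some constant $c\le 10$; taking $k=\lfloor s/c\rfloor$ yields $\max\{|\alpha|:\alpha\in\termset{\Gamma}\}\le s$ and $k\ge s/10$, so the bias is at most $p^k\le p^{s/10}$, as claimed, with $|\sigma|=1$ and $1$-extensibility preserved throughout.

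The step I expect to be the real obstacle is the geometric engineering that keeps the system $1$-extensible at every moment: the ``did the two flips of this round agree?'' test, the halt-versus-retry branch, the hand-off to the round one cell higher on the column, and the switch to the fall-back must each proceed with exactly one available attachment site and must never route a tile into an already-occupied cell, all inside a bounded self-avoiding footprint. Once such a tile set exists the probability estimates above are routine; the work is in the turn-by-turn layout and in the glue strengths (strength-$1$ cooperative-only glues on the scaffold versus strength-$2$ glues carrying the deterministic paths, exactly as in Theorem~\ref{1-ext positive}) needed to maintain these invariants.
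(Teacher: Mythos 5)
Your proposal is correct and is essentially the paper's own argument: cap the von Neumann extractor at $k\approx s/10$ rounds with a constant-size fall-back flip, observe that conditioned on any round succeeding the output is exactly fair, and bound the failure probability by $(u^2+v^2)^k\le p^{k}$ so that all bias is charged to the fall-back, with the $1$-extensible layout (scaffold column, cooperative strength-$1$ glues) sketched at the same level of detail as in the paper. The one quibble is exponent bookkeeping: taking $k=\lfloor s/c\rfloor$ with $c=10$ can give $k<s/10$, so your final step $p^{k}\le p^{s/10}$ need not hold literally; the paper closes this gap by keeping the factor $\left(p-\tfrac12\right)\le p$ that you discarded, obtaining bias $\le p^{k+1}<p^{s/10}$ since $k+1>s/10$, and your unused $\tfrac12\lvert u-v\rvert$ factor gives you the same fix.
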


\begin{proof}
We need at most 7 tile placements to perform a single additional flip when we fail all allotted rounds and each round places 10 tiles. We design a system that can perform as many possible rounds, $k$, given $s-7$ space, where

\begin{equation}\label{eqn:compute_k_given_c}
\frac{s}{10}-1 < k = \lfloor (s-7)/10 \rfloor < \frac{s}{10}.
\end{equation}

In the worst case, the two tile types with the largest difference in concentration, for a given concentration distribution, are the two tile types used in nondeterministic attachment. In our construction, those tiles are the \emph{h tile} and a \emph{t tile} with concentrations $C_{h}$ and $C_{t}$, respectively. Without loss of generality, consider that $C_h > C_t$ and thus, $p = \frac{C_h}{C_h + C_t}$ and let $q=1-p$. Let $P(X=heads)$ denote the probability that this system returns a heads and $P(X=tails)=1-P(X=heads)$. Let $F_k$ denote the probability that the system fails to return a coin flip after $k$ rounds, that is $F_k = (1-2pq)^k$. Therefore, $P(X=heads)=pF_k + \frac{1-F_k}{2}$ and

\begin{equation}
\frac{|P(X=heads) - P(X=tails)|}{2} = F_k \left( p-\frac{1}{2} \right).
\end{equation}

And, since $\frac{1}{2} < p < 1$ and $p^k \ge F_k$,

\begin{align}
\begin{split}
F_k \left( p-\frac{1}{2} \right) &\le p^k \left(p-\frac{1}{2} \right) \\
&\le p^{k+1} \\
&< p^{\frac{s}{10}}.
\end{split}
\end{align}

Therefore,

\begin{equation}
2p^{\frac{s}{10}} \ge |P(X=heads) - P(X=tails)|,
\end{equation}

which implies this system has bias $p^{\frac{s}{10}}$.
\hfill\qed
\end{proof}

\section{Robust Fair Coins with Unstable Concentrations}\label{sec:unstable}
As an extension to the idea of concentration independent solutions outlined in this paper, we consider an adversarial model wherein the concentration distribution of tiles changes during each stage of the assembly process; in other words, the concentrations are unstable.

\begin{definition}[Unstable Concentrations Robust Fair Coin Flip] \label{def:unstable_robust}
Let an \emph{unstable concentration distribution} $P$ be a function mapping $z \in \mathbb{Z^+}$ to concentration distributions over a tile set $T$.  Let $P_i$ denote $P(i)$. For each $B$ that satisfies $A \rightarrow^\Gamma_1 B$, let $t_{A\to B}$ denote the tile in $T$ whose translation is added to $A$ to get $B$. The transition probability from $A$ to $B$ is defined to be

	\begin{equation*}
	TRANS(A,B) = \dfrac{P_{|A|}(t_{A\to B})}{\sum\nolimits_{\{C \vert A\rightarrow^\Gamma_1 C\}} P_{|A|}(t_{A\to C})}
	\end{equation*}

%A tile system $\Gamma$ is an \emph{unstable concentrations robust fair coin flip} system iff $\Gamma$ is a fair coin flip for all unstable concentration distributions.

 A finite tile system $\Gamma$ an \textbf{unstable concentrations robust fair coin flip} iff the set of terminal assemblies in $\prodset{\Gamma}$ is partitionable into two sets $X$ and $Y$ such that $\sum\limits_{x \in X} \probterm{\Gamma \rightarrow x}{C} = \sum\limits_{y \in Y} \probterm{\Gamma \rightarrow y}{C}$ for all unstable concentration distributions $C$.

\end{definition} 

We now prove that there is no unstable concentration robust fair coin flip system in the aTAM.
First, we state and prove a lemma that will be useful in our proof.  Informally, by attaching the same tile type repeatedly to an assembly until it is no longer possible to attach the tile, the resulting assembly will be unique.

\begin{lemma}\label{same-lemma}
For any producible assembly $A_0 \in \prodset{\Gamma}$ and any tile type $t \in T$,  there exists another assembly
$A^*$ such that for any sequence of assemblies  $\langle A_0,A_1,
A_2,\ldots,A_h\rangle$ where tile type $t$ can not be attached to $A_h$, and each $A_{i+1}$ is derived from $A_i$ $(i=0,1,2,\cdots, h-1)$ by attaching a tile of type $t$, then $A_h=A^*$.
\end{lemma}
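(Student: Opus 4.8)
The plan is to show that the assembly $A_h$ obtained by greedily attaching copies of a fixed tile type $t$ until no further copy of $t$ can attach is independent of the order of attachments, by a standard confluence/local-diamond argument. First I would observe that, for a fixed tile type $t$, the set of positions at which a copy of $t$ can be stably attached behaves monotonically: if $p \notin \dom$ of any valid $t$-attachment to $A_i$ is a location where $t$ \emph{can} attach, then attaching $t$ at a different location $p'$ does not destroy the attachability of $t$ at $p$. This is because adding tiles to an assembly only increases the glue strengths available at neighboring empty locations (all glue strengths are non-negative and the bond-graph min-cut condition for a single-tile attachment only involves the glues on the sides of the newly added tile), so a location that already had sufficient bonding strength for $t$ retains it. Also, attaching $t$ at $p'$ cannot later make $p$ \emph{unattachable} for $t$ in the sense relevant here: either $p'=p$ (the location gets filled) or $p'\neq p$ and $p$ remains empty with at least as much bonding strength.

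The key step is a local diamond lemma: if from $A_i$ we can attach $t$ at two distinct locations $p \neq p'$, yielding $A_i \cup \{t@p\}$ and $A_i \cup \{t@p'\}$, then from each of these we can attach $t$ at the other location, and both routes yield the common assembly $A_i \cup \{t@p, t@p'\}$, which is $\tau$-stable (again using monotonicity of available bond strength under tile addition, so the two attachments do not interfere). From this local diamond, together with the fact that every maximal sequence of $t$-attachments to a fixed finite producible assembly is finite (each attachment strictly increases $|A|$, and $A_0$ together with all reachable $t$-only extensions is contained in a bounded region since $\Gamma$ is finite, or more simply: the number of $t$-attachable sites is bounded at each step and can be controlled), I would conclude by a Newman's-lemma-style argument that the rewriting system "attach one copy of $t$" is confluent and terminating, hence has a unique normal form $A^*$ depending only on $A_0$ and $t$. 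Any maximal sequence $\langle A_0, A_1, \ldots, A_h\rangle$ with $t$ unattachable to $A_h$ is exactly a path to this normal form, so $A_h = A^*$.

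The main obstacle I anticipate is making the "non-interference" claim fully rigorous at temperature $\tau \ge 2$: when two copies of $t$ are attached at adjacent positions $p, p'$, one must check that the combined configuration is still $\tau$-stable and, more subtly, that attaching $t$ at $p'$ first does not \emph{change which glues $t$ presents}, which it cannot since $t$ is a fixed tile type with fixed glues — so the only worry is whether the \emph{neighbors} of $p$ provide enough strength, and this is monotone as argued. A secondary subtlety is that "$t$ cannot be attached to $A_h$" must be interpreted as "no location admits a stable $t$-attachment," and one should verify this is equivalent to being a normal form of the one-step rewriting relation; this is immediate from the definitions. I would also need termination: since each step increases assembly size by one and all assemblies in question are producible (hence, if $\Gamma$ is finite, reachable finite configurations), and the attachable region for $t$ starting from $A_0$ is contained in the finite "shadow" of sites ever made available, every such sequence terminates, so Newman's lemma applies and delivers the unique $A^*$.
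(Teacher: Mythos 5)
Your argument is correct in substance but takes a genuinely different route from the paper's. The paper defines $A^*$ explicitly as the least-sized producible assembly containing $A_0$ such that $A^*\setminus A_0$ consists only of tiles of type $t$ and no further $t$ can attach, and then argues directly: a maximal $t$-only sequence that stays inside $A^*$ must fill it up, and the first step of a hypothetical sequence that places a $t$ outside $A^*$ would mean that same $t$ is attachable to $A^*$ itself (since the current assembly is a subassembly of $A^*$), contradicting the choice of $A^*$. You instead treat ``attach one copy of $t$'' as an abstract rewriting relation and prove order-independence by a diamond/confluence argument. Both proofs hinge on exactly the same monotonicity fact---with non-negative glue strengths, a single-tile attachment to a $\tau$-stable assembly is stable iff the new tile binds its occupied neighbors with total strength at least $\tau$, and this can only improve as tiles are added elsewhere---so your local diamond is sound. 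Your route yields a cleaner, reusable statement (order-independence of same-type attachments) without having to exhibit a minimal $A^*$ up front; the paper's route is shorter and avoids rewriting-theory machinery.

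One weak link in your write-up: the appeal to Newman's lemma requires termination, and your justification of termination is not valid---in this paper ``finite'' only means every producible assembly can grow into some terminal assembly, which does not bound space (the paper says so explicitly), and the lemma does not even assume $\Gamma$ is finite; a single tile type $t$ could in principle extend an unbounded line of copies of itself. Fortunately you do not need termination at all: the lemma quantifies only over sequences that already end at an assembly to which $t$ cannot attach, and your one-step diamond is strictly stronger than local confluence---it gives strong confluence, hence confluence and uniqueness of reachable normal forms, by a strip-lemma style induction on the two attachment sequences, with no termination hypothesis. Replace the Newman's-lemma step by that observation and the proof is complete.
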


\begin{proof}
Let $A^*$ be the least-sized producible assembly such that $A^*\setminus A_0$ contains only tiles of type $t$ and the frontier of $A^*$ contains no tiles of type $t$. We will show that $A_0$ can only grow $A^*$ if only allowed to attach tile type $t$.

Towards a contradiction, assume there exists a sequence of assemblies from $A_0$ such that $A_h\neq A^*$. If $A_h$ is some subassembly of $A^*$, note that we may still attach tiles of type $t$ to reach $A^*$, implying that $A_h$ does not fit the specified requirements. Otherwise, let $A_n$ be the first assembly in the sequence which contains a tile not in $A^*$. Consider $A_{n-1}$. There is no tile of type $t$ attachable to $A_{n-1}$ such that the tile is not in $A^*$. If there were, that tile of type $t$ would be attachable to $A^*$, contradicting the definition of $A^*$.  Therefore no such $A_n$ can exist, implying that $A_h = A^*$. \hfill \qed
\end{proof}

%\begin{proof}
%We prove this by induction. Let $p_1,p_2,\cdots, p_k$ be the
%points in the frontier of $A$ where tile type $t$ can be attached to $A$.
%%%%%This was commented out before I commented the rest out - CC%Assume $A'$ is one of the assembly to be reached from $A$.
%Let $A^*$ be the assembly that grows from $A$ by attaching tiles of
%type $t$ to the $k$ attachable positions in $A$. Let $A'$ be the
%least-sized assembly that grows from $A$ through $A^*$ by attaching
%tiles of type $t$ to $A$, and $A'$ does not have an attachable
%position for tile of type $t$. We measure the distance from $A$ to
%$A'$ by the difference in the number of tiles of each assembly. We will prove that
%$A$ only grows towards $A'$ by attaching the same type of tile $t$.

%It is trivial when the distance between $A$ and $A'$ is zero. Assume
%that $A$ only grows to $A'$  by attaching tiles of type $t$ when
%their distance is at most $m$. Consider the case that their distance
%is $m+1$.

%Let $A_1,\cdots, A_k$ be the assemblies that grows from $A$ by
%attaching one tile of type $t$ to one of the $k$ attachable
%positions, respectively. It is easy to see that $A^*$ can grow
%toward $A'$ by attaching tiles of type $t$. Therefore, each $A_i$
%can grow toward $A'$ through $A^*$.

%By our induction hypothesis, $A_i$ can only grows toward $A^*$ by
%attaching tiles of type $t$ for $i=1,2,\cdots, k$. Therefore, $A$
%can only grow to $A^*$ by attaching tiles of type $t$ . \qquad
%\end{proof}

%We now use Lemma \ref{same-lemma} to prove Theorem \ref{thm:unstable_neg} below.

\begin{theorem}\label{thm:unstable_neg}
There does not exist a $\BO(1)$ space unstable concentrations robust fair coin flip tile system in the aTAM.
\end{theorem}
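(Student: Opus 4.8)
The plan is to argue by contradiction: suppose $\Gamma=(T,\sigma,\tau)$ is a $\BO(1)$ space unstable concentrations robust fair coin flip tile system, so there is a global bound $h$ on the size of any producible assembly, and a partition of $\termset{\Gamma}$ into $X$ and $Y$ that splits the terminal probability exactly in half for \emph{every} unstable concentration distribution. The key resource is that the adversary gets to choose the concentration distribution anew at each assembly step (i.e.\ as a function of $|A|$), so I should think of the assembly process as a finite game tree of depth at most $h$, with the adversary controlling the transition probabilities at each internal node, subject only to the constraint that every tile of $T$ has strictly positive concentration at each level.

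First I would observe that at any producible assembly $A$, if the frontier $F(A,\Gamma)$ contains two tiles $a,b$ of \emph{distinct} tile types, then the adversary can, at level $|A|$, make $P(a)$ as close to $1$ as desired relative to $P(b)$ (and relative to any other frontier tile types), thereby pushing the transition from $A$ essentially deterministically toward $A\cup\{a\}$. The cases where the frontier only ever offers tiles of a single type are handled by Lemma~\ref{same-lemma}: repeatedly forcing that single tile type leads to a unique assembly $A^*$, so no real branching happens there. Combining these, the adversary can steer the whole process: whenever there is a genuine choice of tile \emph{types} in the frontier, it can bias that choice arbitrarily close to picking whichever child it prefers. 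The subtlety is that two distinct frontier \emph{locations} might demand tiles of the \emph{same} type, or that the nondeterminism is only over locations and not types --- but by Lemma~\ref{same-lemma} filling in same-type attachments is forced, so the only place the partition's probability can be affected is at assemblies where distinct tile types compete, and there the adversary has near-total control.

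The main step, then, is to use this steering power to drive the terminal probability mass almost entirely into a single terminal assembly. Concretely, I would pick any terminal assembly $\alpha^*$ and a growth path $\sigma = A_0 \to_1 A_1 \to_1 \cdots \to_1 A_k = \alpha^*$; at each step $i$ where $A_{i+1}$ is obtained by attaching a tile of a type not shared by any competing frontier tile, set the level-$|A_i|$ distribution to put overwhelming weight (say $1-\delta$) on that type; at each step where only same-type attachments are available, Lemma~\ref{same-lemma} tells us $A_{i+1}$ is forced regardless. Choosing $\delta$ small enough relative to $h$ and $|T|$, the probability of actually reaching $\alpha^*$ exceeds $1 - \frac{1}{3}$ (or any constant below $1$), so $\probterm{\Gamma\to\alpha^*}{C} > \frac{2}{3}$ for this adversarial $C$. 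But then whichever of $X,Y$ contains $\alpha^*$ has total probability exceeding $\frac{2}{3} > \frac{1}{2}$, so $\left|\sum_{x\in X}\probterm{\Gamma\to x}{C} - \sum_{y\in Y}\probterm{\Gamma\to y}{C}\right| > 0$, contradicting the robust fair coin flip property for the unstable distribution $C$.

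The hard part will be making the steering argument fully rigorous in the presence of frontier locations that can only accept a single shared tile type, and more generally bookkeeping the fact that ``biasing one transition'' may re-open or close branching possibilities downstream; I expect to handle this by an induction on the (bounded) remaining depth $h - |A_i|$, showing that from any producible $A$ the adversary can force, with probability at least $1-\delta^{h-|A|}$-ish, arrival at some chosen terminal descendant, with Lemma~\ref{same-lemma} as the base-case / forced-move tool. Once that inductive claim is in place, the contradiction is immediate since a fair coin flip cannot tolerate one terminal carrying more than half the mass.
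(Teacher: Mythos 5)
Your overall strategy is the paper's: fix a per-step (per-size) concentration distribution that makes one tile type overwhelmingly dominant, use Lemma~\ref{same-lemma} to neutralize the multiple-locations-same-type issue, and union-bound over the $\BO(1)$ many steps to concentrate more than half the terminal mass on a single terminal assembly, contradicting fairness. However, there is a genuine soft spot in how you set up the steering. You pre-select a terminal $\alpha^*$ and an arbitrary assembly sequence to it, and then bias the level distributions toward the type attached at each step of that sequence. The adversary can only bias \emph{tile types}, not attachment \emph{locations}: by the definition of $TRANS$, two frontier locations accepting the same type are exactly equiprobable no matter what concentrations you choose. So if your chosen path attaches type $t$ at location $\ell_1$ while $t$ is simultaneously attachable at $\ell_2$, and the path later intends to put a \emph{different} type at $\ell_2$, the process deviates from your path with constant probability and your ``force arrival at some chosen terminal descendant'' induction cannot go through as stated --- in general the adversary cannot force an arbitrary pre-chosen terminal. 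Invoking Lemma~\ref{same-lemma} only helps when the strategy actually \emph{saturates} the shared type; it says nothing about a path that places one tile of a shared type and then switches types.

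The fix, which is exactly what the paper does, is to let the strategy itself determine the target rather than pre-committing to $\alpha^*$: whenever the current assembly admits attachments, pick one attachable (``dominate'') type $t$, give it concentration $1-\frac{1}{100n^2}$, and keep it dominant until no position accepts $t$; by Lemma~\ref{same-lemma} the assembly reached at the end of each such saturation phase is unique regardless of the (uncontrollable) order of same-type placements, so the intended trajectory and its terminal assembly are well defined. A union bound over the at most $n$ steps (where $n$ is the space bound) shows the process follows this trajectory with probability at least $0.9$, so a single terminal carries mass $>\frac{1}{2}$, contradicting the robust fair coin flip property. Your final contradiction step and the use of bounded space are fine; the only needed repair is replacing ``pick any terminal and any path to it'' with this saturation-defined trajectory.
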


\begin{proof}
Towards a contradiction, assume that a space-$n$ solution does exist.

As the assembly process proceeds, the key point to consider is when the current assembly enters a state in which multiple distinct
positions may attach a tile. In such a case select one type $t$ of all attachable tiles, and increase
its concentration to ensure, with high probability, that assembly proceeds by attaching only tiles of type t up until there is no position to attach type $t$
tiles.           Such a type $t$ is called a \emph{dominate} type. Let the
concentration of the dominate tile type $t$ be $(1-{1\over
100n^2})$. For each step $i$, let $t_i$ denote the dominate type of
concentration $(1-{1\over 100n^2})$.

When there is more than one position to attach the same type of tile $t$, we are assured by Lemma~\ref{same-lemma} that a unique assembly will result after repeatedly placing tiles of type t (in any order) until placement of t is no longer an option.

Given this setup, we have that at each step $i$, the assembly does not grow with a dominate type
with probability at most ${1\over 10n^2}$. With probability at most
${1\over 10n}$, there is a step $i$ among $n$ steps that the
assembly does not grow with the dominate type.

Therefore, there is a terminal assembly that will be generated with probability at least
$0.9$. This is a contradiction. \hfill \qed
\end{proof}

\section{Other Self-Assembly Models}\label{sec:alternative}
% Movtivation
% The aforementioned methods of generating a fair coin flip work well in non-adversarial conditions, i.e. conditions where the concentration is not modified during the attachment process, but fail when the concentration is modified during the tile attachment process.
% The following result showcases three models that can generate a fair coin in conditions where the concentration is unstable.

\begin{figure}[h!]
    \centering
    \begin{subfigure}[b]{0.2\textwidth}
        \centering
        \includegraphics[scale=.40]{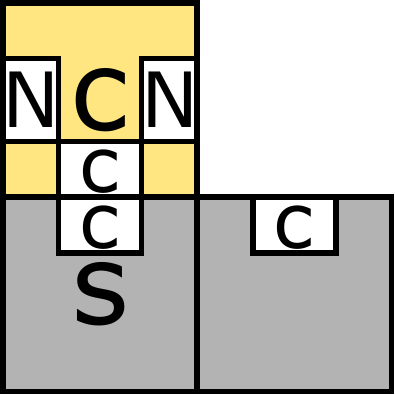}
        \caption{}
        \label{fig:neg_atam_heads}
    \end{subfigure}
    \begin{subfigure}[b]{0.2\textwidth}
        \centering
        \includegraphics[scale=.40]{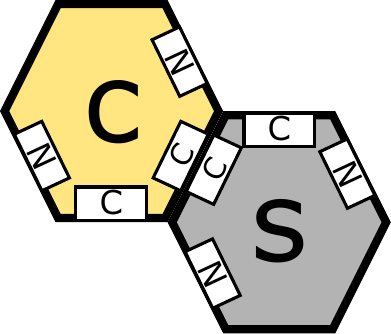}
        \caption{}
        \label{fig:hex_heads}
    \end{subfigure}
    \begin{subfigure}[b]{0.2\textwidth}
        \centering
        \includegraphics[scale=.40]{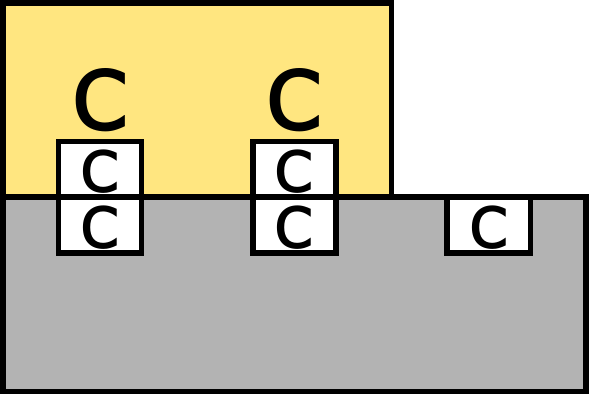}
        \caption{}
        \label{fig:polytam_heads}
    \end{subfigure}
    \begin{subfigure}[b]{0.2\textwidth}
        \centering
        \includegraphics[scale=.40]{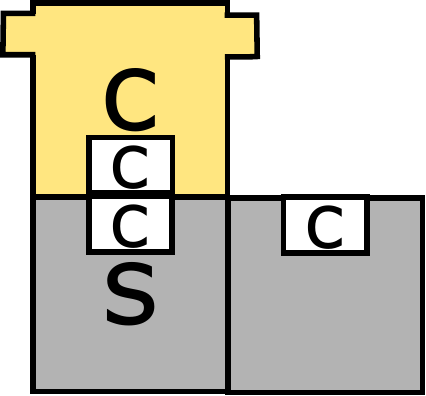}
        \caption{}
        \label{fig:gtam_heads}
    \end{subfigure}
    \caption{The terminal assemblies representing ``heads'' in some alternate models.
    $C$ is a strength-$\tau$ glue and $N$ is a strength-$(-1)$ glue in (a) the aTAM tile system and (b) the hexTAM tile system.
    (c) $C$ is a strength-$1$ glue in a $\tau = 2$ polyTAM tile system.
    (d) $C$ is a strength-$1$ glue in a $\tau = 1$ GTAM  tile system. The abutting geometry does not allow two $C$ tiles to attach.}
    \label{fig:alternate_heads}
\end{figure}

Motivated by the impossibility of robust coin flipping in the aTAM under unstable concentrations, we now consider some established extensions of the aTAM from the literature.  In particular, we show that robust coin flipping with unstable concentrations is possible within the aTAM with negative glues~\cite{DotKarMasNegativeJournal,rgTAM,Winf98},  the hexTAM~\cite{oneTile2014} with negative glues, the polyTAM~\cite{HPR2015UCA}, and the GTAM~\cite{fu2012SAGT}.

\begin{definition}[The Abstract Tile Assembly Model with Negative Interactions]
\label{def:atam_neg}
In the \emph{abstract Tile Assembly Model with Negative Interactions} \cite{Winf98,DotKarMasNegativeJournal,rgTAM}, the restriction that each glue type $g \in \Pi$ must be of non-negative integer strength is removed. We relax this requirement and allow any $g \in \Pi$ to have $str(g) \in \mathbb{Z}$.
\end{definition}

\begin{definition}[The Polyomino Tile Assembly Model]
\label{def:polytam}
In the \emph{Polyomino Tile Assembly Model} (polyTAM) \cite{HPR2015UCA}, a tile assembly system $\Gamma=(T, \sigma, \tau)$ is such that $T$ is the set of polyomino tiles. A polyomino tile can easily be thought of as an arrangement of aTAM tiles, where every tile is adjacent to at least one other tile. These adjacent tiles are bonded with an infinite strength. $\sigma$ is a $\tau$-stable assembly of polyomino tiles. $\tau$ is defined as for the abstract Tile Assembly Model.
\end{definition}

\begin{definition}[The Hexagonal Tile Assembly Model with Negative Interactions]
\label{def:hextam_neg}
In the \emph{Hexagonal Tile Assembly Model} (hexTAM) \cite{oneTile2014}, a tile assembly system $\Gamma=(T, \sigma, \tau)$ is such that each tile in $T$ is a regular unit hexagon. Similar to the aTAM with Negative Interactions Definition \ref{def:atam_neg}, there is no restriction that each glue type $g \in \Pi$ must be of non-negative integer. $\sigma$ and $\tau$ are defined as they are for the abstract Tile Assembly Model.
\end{definition}

\begin{definition}[The Geometric Tile Assembly Model]
\label{def:gtam}
In the \emph{Geometric Tile Assembly Model} (GTAM) \cite{fu2012SAGT}, a tile assembly system $\Gamma=(T, \sigma, \tau)$ is such that each edge of the tiles in $T$ are assigned a geometric pattern. Tile attachments that would result in an overlap of edge geometries are disallowed. $\sigma$ and $\tau$ are defined as for the abstract Tile Assembly Model.
\end{definition}

\begin{theorem}\label{thm:secondary_models}
There exists a $\BO(1)$ space unstable concentration robust fair coin-flip tile system in the aTAM with negative glues, polyTAM, hexTAM with negative glues, and the GTAM.
\end{theorem}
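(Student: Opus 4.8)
The plan is to give, for each of the four models, a single $\BO(1)$-size gadget whose only nondeterminism is a symmetric two-way race that the ambient model --- unlike the plain aTAM --- can \emph{close off} after the first choice. Concretely, each gadget consists of a deterministic ``trunk'' that grows from the seed into a unique assembly $\beta$ whose frontier is exactly two empty positions $p_1$ and $p_2$, each of which can receive a copy of one and the \emph{same} tile type $C$ with bonding strength at least $\tau$. The geometry (in the polyTAM and GTAM) or a strength $-1$ glue $N$ carried on the relevant side(s) of $C$ (in the negative-glue aTAM and hexTAM) is arranged so that once a $C$-tile occupies $p_1$, a $C$-tile at $p_2$ would either geometrically overlap it or would attach with net strength $\tau-1<\tau$; hence the one-step extension $\beta_1$ of $\beta$ obtained by placing $C$ at $p_1$ is $\tau$-stable and producible, but the further extension placing $C$ also at $p_2$ is $\tau$-unstable and hence not producible. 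From $\beta_1$ the assembly finishes deterministically (if any finishing is needed), and symmetrically from the extension $\beta_2$ placing $C$ at $p_2$, yielding precisely two terminal assemblies $\alpha_H\supseteq\beta_1$ and $\alpha_T\supseteq\beta_2$, the ``heads'' and ``tails'' assemblies; Figure~\ref{fig:alternate_heads} depicts $\alpha_H$ in each model.

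Granting such a gadget, the coin-flip analysis is immediate and insensitive to unstable concentrations. Partition $\termset{\Gamma}=\{\alpha_H\}\cup\{\alpha_T\}$. Every transition along every producing assembly sequence is forced (probability $1$) except the single transition out of $\beta$, whose two options $\beta\rightarrow^{\Gamma}_1\beta_1$ and $\beta\rightarrow^{\Gamma}_1\beta_2$ \emph{both add a tile of type $C$}. Hence by the unstable-concentration transition rule of Definition~\ref{def:unstable_robust},
\[
TRANS(\beta,\beta_1)=TRANS(\beta,\beta_2)=\frac{P_{|\beta|}(C)}{P_{|\beta|}(C)+P_{|\beta|}(C)}=\frac12 ,
\]
independently of the concentration distribution $P_{|\beta|}$ active at that step, hence independently of the entire unstable distribution $P$. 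Therefore $\probterm{\Gamma\rightarrow\alpha_H}{C}=\probterm{\Gamma\rightarrow\alpha_T}{C}=\frac12$ for every unstable concentration distribution $C$, so $\Gamma$ is an unstable concentrations robust fair coin flip using $\BO(1)$ space. This is precisely the leverage the plain aTAM lacks: there, monotonicity of the bond graph guarantees that if $C$ may attach at $p_2$ in $\beta$ then it may still attach at $p_2$ in $\beta_1$, so a symmetric race cannot be collapsed to two distinct terminals --- consistent with Theorem~\ref{thm:unstable_neg}.

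The remaining work, and the only real obstacle, is exhibiting the four gadgets explicitly and verifying the two structural requirements in each model: (i) \emph{availability/symmetry} --- before either choice is made, $p_1$ and $p_2$ are genuinely both attachable by the same tile type $C$, so the race is truly $50/50$; and (ii) \emph{mutual exclusion} --- after one choice the other position is dead. For the negative-glue aTAM and hexTAM one checks by a min-cut computation that a $C$-tile at $p_1$ together with a $C$-tile at $p_2$ can be severed from the rest of the assembly by cutting one strength-$\tau$ glue and one strength-$(-1)$ glue, total $\tau-1$, so the two-$C$ assembly is $\tau$-unstable; for the polyTAM one chooses the polyomino shape of $C$ so that two copies anchored at $p_1$ and $p_2$ would occupy a common cell; for the GTAM one chooses the edge geometries of $C$ so that two adjacent copies would have interpenetrating edge patterns. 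One also checks in each case that the trunk grows deterministically to $\beta$ with no other frontier branching, that $\beta_1,\beta_2,\alpha_H,\alpha_T$ are as claimed, and --- in the negative-glue models --- that no assembly on the intended growth paths is later rendered $\tau$-unstable. These reduce to finite case checks against Figure~\ref{fig:alternate_heads}, which I would carry out model by model while reusing the single race-to-$50/50$ argument above.
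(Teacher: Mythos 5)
Your proposal is correct and follows essentially the same route as the paper: the paper's proof likewise hinges on a single nondeterministic step in which the \emph{same} tile type $c$ can attach in two mutually exclusive ways (enforced by negative glues or geometry, as depicted in Figure~\ref{fig:alternate_heads}), so that $TRANS(\sigma,heads)=TRANS(\sigma,tails)=\frac{P(c)}{2P(c)}=\frac12$ for every (unstable) concentration distribution. Your added remarks on verifying mutual exclusion and on why this fails in the plain aTAM are consistent with, and slightly more explicit than, the paper's argument.
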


\begin{proof}\label{proof:secondary_models}
Consider a tile assembly system $\Gamma = (T,\sigma,\tau)$ with $3$ producible assemblies: $\sigma$, a terminal assembly $heads$, and a terminal assembly $tails$. Further, $\sigma \rightarrow^\Gamma_1 heads$ and $\sigma \rightarrow^\Gamma_1 tails$. Let $t_{\sigma \to heads}$ and $t_{\sigma \to tails}$ be the same tile $c$, then $TRANS(\sigma, heads) = TRANS(\sigma, tails) = \frac{P(c)}{2P(c)} = \frac{1}{2}$. Systems which meet these characteristics in the mentioned models are shown in Figure~\ref{fig:alternate_heads}.\hfill \qed
%Consider an assembly $S$ that has two distinct locations where a single tile type $C$ can attach. In a tile assembly system, two locations will have equal probability for attachment if the same tile type is the only possible attachment to either location. This is shown by the result of $P(t_C)/(P(t_C) + P(t_C))$ being exactly $1/2$. $C$ will then attach to either location with equal probability and if the attachment of $C$ to either location results in a terminal assembly, i.e. the attachment of $C$ prevents the other previously available attachment location from binding another $C$, a fair coin flip has been generated.
\end{proof}

% \paragraph{Qualifications for Fair Coin Flipping}
% A tile system can coin-flip in bounded space should the following two conditions be met:
% \begin{itemize}
% 	\item A single tile type (or duple in the case of DaTAM) can attach to two different locations with equal probability.
% 	\item The attachment of this tile type to one location prevents the second attachment to the alternate location.
% \end{itemize}
% It will be shown that meeting these two conditions will produce a fair coin-flip in bounded space.

% \begin{proof}
% Consider an assembly $S$ that has two locations where a single tile type $t$ can attach. In a tile assembly system, two locations will have equal probability for attachment if the same tile type is the only possible attachment to either location. Since the only tile type that will be attaching is $t$, attachment probability is independent of a concentration distribution $P$. $t$ will then attach to either location with equal probability and if the attachment of $t$ to either location results in a terminal assembly, i.e. the attachment of $t$ prevents the other previously available attachment location from binding another $t$, a fair coin flip has been generated.
% \end{proof}

\section{Conclusions and Future Work}\label{sec:conclusion}
In this paper we have introduced the problem of designing robust random number generating systems.  Generating such random numbers is fundamental for the implementation of randomized self-assembly algorithms.  By incorporating concentration independent robustness into the design of such systems, we directly address the practical issue of limited control over species concentrations.  Our goal in this work is to provide a stepping stone for the creation of general, robust randomized self-assembly systems.  As evidence towards the feasibility of this goal, we have shown how our gadgets can be applied to convert a large class of linear systems into equivalent systems with the concentration robustness property.  A more general open problem is as follows: given a general tile system, is it possible to convert the system to an approximately equivalent system that is concentration robust?  If possible, how efficiently can this be accomplished in terms of scale factor and approximation factor?

Another direction for future work is the consideration of generalizations of the coin flip problem.  Our partition definition for coin flip systems extends naturally to distributions with more than two outcomes, as well as non-uniform distributions.  What general probability distributions can be assembled in $\mathcal{O}(1)$ space, and with what efficiency?  We have also introduced the online variant of concentration robustness in which species concentrations may change at each step of the self-assembly process.  We have shown that when such changes are completely arbitrary, coin flipping is not possible in the aTAM.  A relaxed version of this robustness constraint could permit concentration changes to be bounded by some fixed rate.  In such a model, how close to a fair coin flip can a system guarantee in terms of the given rate bound?  As an additional relaxation, one could consider the problem in which an initial concentration assignment may be \emph{approximately} set by the system designer, thereby modeling the limited precision an experimenter can obtain with a pipette.

A final line of future work focuses on applying randomization in self-assembly to computing functions.  The parallelization within the abstract tile assembly model allows for substantially faster arithmetic than what is possible in non-parallel computational models~\cite{Keenan2016}.  Can randomization be applied to solve these problems even faster?  Moreover, there are a number of potentially interesting problems that might be helped by randomization, such as primality testing, sorting, or general simulation of randomized boolean circuits.

% \section*{Acknowledgements}\label{sec:acks}
% The authors would like to thank the reviewers for their time and comments to make this work better.  We especially want to acknowledge a reviewer who noted that our single seed construction might work at temperature one, which turned out to be possible and led to the corresponding results.

\bibliographystyle{plain}
\bibliography{tam}
\end{document}